\documentclass[12pt]{article}
\usepackage{amsfonts}
\newcommand{\algname}{PReaCH}
\usepackage{subfigure}
\usepackage{fullpage}
\usepackage{url}
\usepackage{lscape}

\usepackage{verbatim}

\usepackage{url}
\usepackage{color}
\usepackage{flushend}
\usepackage{epic}
\usepackage{eepic}
\usepackage{amsmath}
\usepackage{graphicx}
%

\newcommand{\seq}[1]{\left\langle #1\right\rangle}

\newcommand{\Id}[1]{\ensuremath{\text{{\sf #1}}}}

\newcommand{\set}[1]{\left\{ #1\right\}}
\newcommand{\gilt}{:}

\newcommand{\setGilt}[2]{\left\{ #1\gilt #2\right\}}





\newcommand{\realrange}[2]{\left[#1, #2\right]}

\newcommand{\unitrange}[2]{\realrange{0}{1}}





\newcommand{\Oh}[1]{\mathcal{O}\!\left( #1\right)}



\newcommand{\llabel}[1]{\label{\labelprefix:#1}}
\newcommand{\labelprefix}{} 

\newcommand{\discussionsize}{\small}

\marginparpush2mm
\marginparsep1mm 
\newcommand{\frage}[1]{{[\bf?: #1]}}

\newcommand{\punkt}{\enspace .}

\newenvironment{code}{\noindent\normalsize
\begin{tabbing}%
\hspace{2em}\=\hspace{2em}\=\hspace{2em}\=\hspace{2em}\=\hspace{2em}\=%
\hspace{2em}\=\hspace{2em}\=\hspace{2em}\=\hspace{2em}\=\hspace{2em}\=%
\kill}{\end{tabbing}}

\newcommand{\labelcommand}{}
\newcommand{\captiontext}{}
\newsavebox{\codeparam}
\newcounter{lineNumber}
\newenvironment{disscodepos}[3]{%
\renewcommand{\labelcommand}{#2}%
\renewcommand{\captiontext}{#3}%
\sbox{\codeparam}{\parbox{\textwidth}{#3}}%
\begin{figure}[#1]\begin{center}\begin{code}\setcounter{lineNumber}{1}}{%
\end{code}\end{center}\caption{\llabel{\labelcommand}\captiontext}\end{figure}}

{\end{disscodepos}}


\newcommand{\Is}{\mbox{\rm := }}





\newdimen\endofsize\endofsize=0.5em
\def\endofbeweis{~\quad\hglue\hsize minus\hsize
                 \hbox{\vrule height \endofsize width
\endofsize}\par}
\newenvironment{proof}{{\bf Proof:} }{\endofbeweis}

\newcommand{\postponed}[1]{}
\renewcommand{\frage}[1]{}

\newtheorem{theorem}{Theorem}
\newtheorem{lemma}[theorem]{Lemma}

\newcommand{\Ebackward}{\bar{E}}
\newcommand{\Gbackward}{\bar{G}}
\newcommand{\dfsNum}{\phi}

\newcommand{\dfsNumMax}{\hat{\phi}}
\newcommand{\order}{\mathrm{order}}
\newcommand{\pTree}{p_{\mathrm{tree}}}

\newcommand{\phiMin}{\dfsNum_{\mathrm{min}}}
\newcommand{\phiGap}{\dfsNum_{\mathrm{gap}}}
\newcommand{\range}{\mathrm{range}}
\newcommand{\maxind}{\mathrm{maxind}}

\begin{document}
\title{\algname: A Fast Lightweight Reachability Index using Pruning and Contraction Hierarchies}
\author{Florian Merz and Peter Sanders\\
 Karlsruhe Institute of Technology, Karlsruhe, Germany\\
\url{sanders@kit.edu,flomerz@gmail.com}}

\maketitle

\begin{abstract}
  We develop the data structure PReaCH (for Pruned Reachability Contraction
  Hierarchies) which supports \emph{reachability queries} in a directed graph,
  i.e., it supports queries that ask whether two nodes in the graph are
  connected by a directed path. \algname\ adapts the contraction hierarchy
  speedup techniques for shortest path queries to the reachability setting. The
  resulting approach is surprisingly simple and guarantees linear space and near
  linear preprocessing time. Orthogonally to that, we improve existing pruning
  techniques for the search by gathering more information from a single
  DFS-traversal of the graph.  \algname-indices significantly outperform
  previous data structures with comparable preprocessing cost. Methods with
  faster queries need significantly more preprocessing time in particular for
  the most difficult instances.
\end{abstract}

\section{Introduction}
\label{s:intro}

Many applications are modelled using graphs of some kind. One of the most
fundamental questions one may ask about a graph is whether there is a path between two
given nodes. For example, in a graph modelling the hierarchy of a company one
might ask whether one person is subordinate to another. In a network of papers
with links expressing citations, one might ask whether one paper is based on
another paper in some, possibly indirect, way.  Important applications are
semantic networks / RDF graphs~\cite{lassila1999resource} and XML 
databases~\cite{deutsch1999query}.
Applications in bioinformatics databases include protein-protein
interactions~\cite{lassila1999resource}, metabolic 
networks~\cite{krummenacker2005querying}, and gene regulatory 
networks \cite{huerta1998regulondb}.  Due to these manyfold applications and
since reachability queries are very expensive in classical data bases, the
reachability problem has obtained a lot of attention in the database community.

Reachability queries can be answered in linear time using any kind of graph
exploration, e.g., by breadth first search. However, for many applications this
is too slow since a large number of queries has to be answered. Assuming that
the graph changes rarely, one can afford to do some preprocessing, i.e., we
compute a data structure $I$ that helps to accelerate later queries. $I$ can be
viewed as an index data structure.  When comparing such preprocessing approaches
one faces a tradeoff between at least three criteria -- preprocessing time, the
space needed for the index, and the query time.  For example, the above query by
BFS approach has zero preprocessing costs yet requires linear query time. On the
other hand, precomputing all possible answers needs space quadratic in the
number of nodes but allows constant time queries. Clearly, compromises are
relevant here.

Our starting point was the idea to transfer the rapid recent progress on speedup
techniques for route planning to reachability indices. We settled on
\emph{Contraction Hierarchies} (CHs) \cite{GSSV12} because they are one of the
most successful such techniques and because we found an adaptation to the
reachability problem with surprisingly low preprocessing time. In comparison,
other successful techniques such as arc flags, landmark A$^*$, transit node
routing, customizable route planning or hub labelling (see \cite{BDGMPSWW14tr}
for a recent survey) seem to be inferior at least with respect to guarantees on
preprocessing time.  In Section~\ref{ss:rch} it turns out that
\emph{Reachability Contraction Hierarchies (RCHs)} are much simpler than
shortest path CHs and guarantee near linear preprocessing time and linear space
consumption. During preprocessing, RCHs just repeatedly remove nodes with
in-degree or out-degree zero. Edges leaving nodes with in-degree (out-degree)
zero are marked for forward (backward) exploration. Queries are based on
bidirectional graph exploration. Forward (backward) exploration only has to
consider edges marked as forward (backward) during preprocessing.  This can lead
to dramatic speedups since the average branching factor of the graph exploration
is halved.  In contrast to RCHs, shortest path CHs need to insert additional
\emph{shortcut edges} during construction which may lead to quadratic space
and cubic preprocessing for graphs that do not have very pronounced
hierarchical properties such as road networks.

An equally important ingredient of \algname\ is a suite of heuristics for
pruning graph exploration during a query. We adopt and improve techniques from
GRAIL \cite{YCZ12}:
\emph{Topological levels} are essentially a compressed form of a
topological ordering. When the level of a node $v$ is larger or equal to the
level of a node $t$ then there certainly is no path from $v$ to $t$.  This
information can be used for both forward and backward search and we can
calculate different topological levels to allow further pruning, see
Section~\ref{ss:top} for more details.

More sophisticated pruning rules are derived from a DFS-numbering of the nodes.
For each node $v$, we identify two (\emph{full}) ranges of DFS numbers of
nodes that are reachable from $v$ and three (\emph{empty}) ranges of DFS
numbers that are not reachable from $v$. During a query, full ranges can be
used to stop the search completely while the empty ranges can prune the
search as in topological levels. All these ranges can be derived from a single
DFS traversal of the graph.

In Section~\ref{s:exp} we report on extensive experiments showing that \algname\
performs very well.

\subsection*{Related Work}

There has been intensive previous work on reachability indices. One interesting
categorization \cite{jin2012scarab} distinguishes between compression of the
transitive closure
\cite{Jagadish:1990:CTM:99935.99944,chen2008efficient,wang2006dual,jin2011path,van2011memory},
hop labeling, and refined online search. The former method gives very good query
times for small instances yet does not scale to the larger graphs we are
interested in here. We therefore do not consider it more closely. Hop-labeling
assembles paths from labels stored with the nodes. In the most simple case of
2-hop labeling
\cite{cohen2003reachability,cheng2008fast,cheng2006fast,schenkel2004hopi,cheng2013tf,YAIY13,JinWang13},
each node stores a forward and backward set of nodes such that there is an
$s$-$t$ path if and only if the forward set of $s$ has a nonempty intersection
with the backward set from $t$. There is also a 3-hop labeling technique where
an additional lookup table for a small core set of nodes is used
\cite{jin2008efficiently}. Here we compare our method with the two most recent
2-hop labeling methods which outperform previous methods based on hop-labeling
and transitive closure compression: TF \cite{cheng2013tf} uses \emph{topological
  folding} to contract the graph in such a way that the maximal path length is
halved in each contraction. PPL \cite{YAIY13} uses \emph{pruned path labeling}
which can be viewed as a mix between 2-hop and 3-hop labeling since it does not
store node sets explicitly but only entry points into paths.

Our technique \algname\ fits into the category refined online search
\cite{chen2005stack,trissl2007fast,yildirim2010grail,YCZ12}. Here the idea is to
precompute information stored with nodes and edges that allows to prune the
search. GRAIL \cite{YCZ12} is the most successful previous technique in this
class which we therefore use as a basis for \algname.

Reachability indices and speedup techniques for computing shortest paths (e.g.,
in road networks) have striking similarities.  For example, the 2-hop labeling
technique \cite{cohen2003reachability} was immediately developed for both
applications. In contrast, the 3-hop labeling technique
\cite{jin2008efficiently} was developed independently from the analogous
technique for shortest paths known as \emph{transit node routing} \cite{BFSS07}.
The hierarchy based 2-hop labeling method described in \cite{JinWang13} cites
CHs and other hierarchical route planning methods as similar (and probably
motivating) approaches. We see a particular resemblance to highway node routing
\cite{SS07a}. However, \cite{JinWang13} dismisses CHs as an approach to
reachability because they conjecture that an excessive amount of shortcuts would
be necessary. Hence, our RCHs without shortcuts can be viewed as a surprising
result.

We demonstrate how the successful speedup technique Contraction Hierarchies
(CH) \cite{GSSV12} can be adapted to the reachability problem. To demonstrate
the analogy we outline the concept for shortest paths here and develop it for
reachability in Section~\ref{ss:rch}. CHs are built by successively removing
(contracting) ``unimportant'' nodes from the graph. CHs make sure that
shortest paths in the remaining graph are preserved. This is achieved by
possibly inserting shortcuts. A query can then be performed by running a
bidirectional variant of Dijkstra's algorithm on the full node set: the forward
(backward) search only considers edges and shortcuts to (from) ``more
important'' nodes. The construction ensures that some shortest path will be
found this way.


\section{Preliminaries}
\label{s:prelim}

Consider a directed graph $G=(V,E)$. Let $n=|V|$ and $m=|E|$. A reachability
query $(s,t)$ asks whether there is a path from $s$ to $t$ -- in symbols
$s\rightarrow t$. Queries with result {\tt true}/{\tt false} are called
positive/negative respectively. The reachability problem can be reduced to the
case of directed acyclic graphs (DAGs) as follows: Recall, that \emph{strongly
  connected components} (SCCs) of a graph can be computed in linear time (e.g.,
\cite{CorEtAl90}). One SCC is defined as a maximal set of nodes which are
mutually reachable from each other. Hence, it suffices to check whether there is
some path from the component of $s$ to the component of $t$ -- this is
equivalent to a reachability query in the acyclic \emph{component graph} $G_c=(V_c,E_c)$. 
The SCCs are the nodes of the component graph and $E_c=\setGilt{(U,W)\in V_c^2}{(U\times V)\cap E\neq\emptyset}$. From
now on we will therefore assume that $G$ is a DAG.

Breadth first search (BFS) from a node $s$ explores all nodes reachable from $s$
by visiting them in order of increasing shortest path distance. Since the edges
are unweighted, this can be implemented by keeping unexplored nodes in a
FIFO-queue. The reachability problem can be solved by running BFS: When $t$ is
found, the BFS is aborted and {\tt true} is returned. Otherwise, {\tt false} is
returned once all nodes reachable from $s$ are exhausted. 

Reachability can also
be tested using \emph{bidirectional BFS} where BFS steps from $s$ alternate with
BFS steps from $t$ on the \emph{backward graph} $\Gbackward=(V,\Ebackward)$ with
$(u,v)\in\Ebackward$ whenever $(v,u)\in E$. When any node is reached from both
sides, {\tt true} is returned. When either search space is exhausted, {\tt
  false} is returned. In the worst case, bidirectional BFS does twice the amount
of work as unidirectional BFS and we also have considerable space overhead
because we have to store both outgoing edges and incoming edges. However, in many
positive queries and in negative queries where the backward search space is much
smaller than the forward search space, we can be dramatically faster than
unidirectional BFS.

Depth first search (DFS) explores the nodes of a graph in a recursive fashion:
There is an outer loop through the nodes looking for \emph{roots} $r$ for an
exploration of the unexplored nodes reachable from $r$.  The recursive function
\Id{explore}$(u)$ inspects the outgoing edges $(u,v)$ and recursively calls
itself when $v$ has not been explored yet.  DFS defines a spanning forest of the
graph -- one tree for each considered root.  Let $\dfsNum(v)\in 1..n$ define the
order in which the nodes are explored by DFS.%
\footnote{Throughout this paper $i..j$ is a shorthand for $\set{i,\ldots,j}$.}
Note that $\dfsNum$ is a \emph{preordering} of the nodes in each tree of the DFS
forest. In particular, the nodes in a subtree $T$ of the forest rooted at $v$
have numbers starting at $\dfsNum(v)$ and ending at $\dfsNum(v)+|T|-1$.

A third useful way to explore the nodes of a DAG are \emph{topological levels}
\cite{YCZ12}. \emph{Sources} of the graph, i.e., nodes $v$
with out-degree zero have level $L(v)=0$. The remaining nodes have level
$L(v)=1+\max\setGilt{L(u)}{(u,v)\in E}$. Similarly, we can define \emph{backward
  topological levels} based on \emph{sink} nodes of the graph, i.e., nodes with
in-degree zero.

\section{The \algname\ Reachability Index}
\label{s:algo}

\subsection{Reachability Contraction Hierarchies}
\label{ss:rch}

In general, a \emph{Reachability Contraction hierarchy} (RCH) can be defined
based on any numbering $\order: V\rightarrow 1..n$. We successively
\emph{contract} nodes with increasing $\order(v)$. Contracting node $v$ means
removing it from the graph and possibly inserting new edges such that the
reachability relation in the new graph is the same as in the previous graph,
i.e., if the graph contains a path $\seq{u,v,w}$ and after removing $v$, $u$
would no longer be reachable from $v$, we have to insert a new \emph{shortcut}
edge $(u,w)$.  The query algorithm does not use the contracted versions of
the graph but only the ordering and the shortcuts. This algorithm is based on
the observation that in an RCH it suffices to look at ``up-down'' paths:
\begin{theorem}
  In a graph $G$, $s\rightarrow t$ if and only if in an RCH $G'$ obtained from
  $G$ there is a path of the form 
  $$\seq{s=u_1,\ldots,
    u_k=v=w_1,\ldots,w_{\ell}=t}$$ 
  such that 
  $$\order(u_1)<\cdots<\order(u_k)\text{ and }$$
  $$\order(w_1)>\cdots>\order(w_{\ell})\punkt$$
\end{theorem}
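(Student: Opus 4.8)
The plan is to prove the two implications separately. The ``if'' direction (an up-down path in $G'$ witnesses $s\to t$ in $G$) is easy, and the ``only if'' direction carries essentially all the work. A single auxiliary lemma handles both: writing $v_1,\ldots,v_n$ for the nodes ordered so that $\order(v_i)=i$, and $G_i$ for the graph obtained after contracting $v_1,\ldots,v_i$ (so $G_0=G$, and $G_i$ has exactly the nodes of $\order>i$), I would show by induction on $i$ that for all surviving nodes $u,w$ one has $u\to w$ in $G_i$ iff $u\to w$ in $G$. The forward direction here is just the shortcut rule read backwards: an occurrence of $v_i$ on a path of $G_{i-1}$ always sits as $\langle a,v_i,b\rangle$, and after contracting $v_i$ we have $a\to b$ in $G_i$ (directly, or along the freshly inserted shortcut $(a,b)$), so splicing in these replacements turns any $u$-$w$ walk of $G_{i-1}$ into one of $G_i$. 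Conversely, every edge of $G_i$ is either an edge of $G_{i-1}$ or a shortcut $(a,b)$, and a shortcut is created only when $a\to b$ currently holds, so a walk of $G_i$ expands into one of $G_{i-1}$. Since all shortcuts ever created survive in $G'$, the same expansion argument shows that any path in $G'$ (in particular an up-down one) expands to a path of $G$, which is the ``if'' direction.

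For the ``only if'' direction I would prove, by \emph{downward} induction on $i$ from $n$ down to $1$, the statement $P(i)$: for all $u,w$ with $\order(u)\ge i$, $\order(w)\ge i$ and $u\to w$ in $G$, there is an up-down path from $u$ to $w$ in $G'$. Taking $i=1$ yields the theorem. The base case $P(n)$ only concerns $u=w=v_n$ and is trivial. For the step, assume $P(i)$ and prove $P(i-1)$; if both endpoints already have $\order\ge i$ then $P(i)$ applies verbatim, so I may assume one endpoint equals $v_{i-1}$. Say $u=v_{i-1}$ (the case $w=v_{i-1}$ is symmetric, appending rather than prepending an edge), with $w\ne v_{i-1}$, so $\order(w)\ge i$.

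By the auxiliary lemma applied to $G_{i-2}$ (whose surviving nodes are exactly those of $\order\ge i-1$), $v_{i-1}\to w$ in $G_{i-2}$; take a simple path from $v_{i-1}$ to $w$ there, with first edge $(v_{i-1},x_1)$. Since $x_1$ lies on this path it has $\order\ge i-1$, and $x_1\ne v_{i-1}$ forces $\order(x_1)\ge i$; moreover the rest of the path, from $x_1$ to $w$, avoids $v_{i-1}$, so all its edges persist when $v_{i-1}$ is contracted, giving $x_1\to w$ in $G_{i-1}$ and hence in $G$. Now $P(i)$ provides an up-down path $Q$ from $x_1$ to $w$ in $G'$, which I may take to be simple. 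The key observation is that \emph{every} node of $Q$ has $\order\ge\min(\order(x_1),\order(w))\ge i$: on its increasing part the orders are at least $\order(x_1)$ and on its decreasing part at least $\order(w)$. Consequently $v_{i-1}$ does not occur on $Q$ and $\order(v_{i-1})=i-1$ lies strictly below every order on $Q$, so prepending the edge $(v_{i-1},x_1)\in E(G')$ to $Q$ produces a simple path in $G'$ that is still strictly increasing up to the peak of $Q$ and then strictly decreasing -- an up-down path from $v_{i-1}$ to $w$, as required.

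The main obstacle is exactly this endpoint case, and it is also why the tempting shortcut ``take the extremal-order node of a $G$-path and recurse on the two halves'' fails: an up-down path is a \emph{peak}, so one may not descend through a low-order node in the interior, and when an endpoint has low order one is forced to build the path edge by edge, leaning on the shortcuts the auxiliary lemma guarantees. Getting the induction running in the right direction (downward in $i$) together with the invariant that every node of an up-down path automatically lies above the minimum of its endpoint orders is the delicate part; after that, the splicing is routine.
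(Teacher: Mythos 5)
Your proof is correct, but it takes a genuinely different route from the paper's. The paper argues by surgery on an arbitrary $s$-$t$ path $P$ in $G$: it locates the \emph{lowest local minimum} $\order(v)$ along $P$, i.e.\ a subsequence $\seq{u,v,w}$ with $\order(v)<\order(u)$ and $\order(v)<\order(w)$ minimizing $\order(v)$, and replaces $\seq{u,v,w}$ either by an alternative $u$-$w$ path that existed when $v$ was contracted (all of whose nodes have order above $\order(v)$) or by the shortcut $(u,w)$; since the lowest local minimum strictly rises at each step, the process terminates with an up-down path. You instead run two inductions: an auxiliary lemma showing each contraction step preserves reachability among surviving nodes (which the paper folds into the definition of contraction and uses implicitly), and a downward induction on the order index that builds up-down paths by peeling off a low-order endpoint and prepending or appending a single edge of $G_{i-2}$, using the observation that every node of an up-down path has order at least the minimum of its endpoints' orders. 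Your version is longer but dispenses with the termination/potential argument and makes explicit two points the paper leaves informal -- that the replacement path's edges all survive into $G'$ and that its nodes all lie above the eliminated minimum; the paper's version is shorter, works directly on a given path (hence is closer to an algorithm for unpacking paths), and avoids your endpoint case analysis. Both arguments are sound.
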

\begin{proof}
  If there is an $s$-$t$ path in $G'$ then there must also be a path in $G$
  since shortcuts never introduce new connections.

  For the opposite direction, consider any $s$-$t$ path $P$ in $G$.  $P$ may
  not be an up-down path. However, we will show that it can be gradually
  transformed into an up-down path in $G'$. Consider the subsequence
  $M\Is\seq{u,v,w}$ of $P$ with $\order(v)<\order(u)$ and $\order(v)<\order(w)$
  which minimizes $\order(v)$, i.e., we consider the ``lowest local minimum''.
  When $v$ was contracted, two things may have happened. Either, there was
  another path $M'$ from $v$ to $w$ in the graph at the time of contraction. 
  Note that all nodes $u\in M'$ have order $\order(u)>\order(v)$.
  In
  this case, we replace $M$ by $M'$ in $P$.  Otherwise, the RCH construction
  inserted a shortcut edge $e=(u,w)$ into the graph.  In this case, we replace
  $M$ by $e$ in $P$. Either way, we end up with a path from $s$ to $t$ whose
  lowest local minimum (if any) is higher than before. We can continue this process until
  no local minimum remains -- $P$ has been transformed into an up-down path.
\end{proof}
Hence, a query can be implemented by a variant of the bidirectional BFS from
Section~\ref{s:prelim} where both searches only look at adjacent nodes with
larger $\order$-value than the current node. A further modification is that now
both search spaces must be exhausted in order to safely return {\tt false}.

We now exploit the fact that any DAG contains source nodes (with in-degree zero)
and sink nodes (with out-degree zero). Contracting such a node $v$ never
requires the insertion of shortcuts because shortcuts always bridge paths of the
form $\seq{u,v,w}$ but such paths cannot exist because $v$ either lacks incoming
or outcoming edges.  Hence from now on we restrict the considered orderings such
that they only contract source or sink nodes.\footnote{We also made experiments
  with more general RCHs. While this indeed improved query time by some small
  fraction, the overhead in terms of preprocessing time and space was
  considerable so that we did not explore this option any further.}  This
implies a huge simplification and acceleration of preprocessing compared to
general RCHs. In particular, it suffices to use a static graph data structure
and we get linear time preprocessing except perhaps for deciding which source or
sink nodes should be contracted next. In contrast, a general CH might have to
insert a quadratic number of shortcuts and indeed this is a significant
limitation for computing shortest path CHs for graphs with a less pronounced
hierarchy than road networks. In other words, RCHs have a much wider
applicability and robustness than shortest path CHs.

There are still many ways to define an RCH ordering. We use the total degree
(in-degree plus out-degree) of the nodes in the input graph for deciding in
which order to contract source and sink nodes, i.e., nodes with smallest degree
are contracted first.  The ordering is computed on-line -- a priority queue
holds the source and sink nodes of the current graph using their degree in the
input graph as priority. In each iteration, a lowest priority node $v$ is
contracted. The idea behind our priority function is to delay contraction of
high degree nodes and thus to limit the branching factor of the resulting query
search spaces. Since no shortcuts are needed, the contraction process
degenerates to a kind of graph traversal -- a contracted node is not really
removed but just marked as contracted and the degrees of its neighbors are
decremented. Those nodes which become sources or sinks are inserted into the
priority queue. Note that only $2n$ priority queue operations are performed. In
particular, no decrease key operations such as in Dijkstra's shortest path
algorithm are needed. The running time of this algorithm is ``near linear'' in
several senses. Using a comparison based priority queue, the running time
becomes $\Oh{m+n\log n}$ with a quite small constant in front of the $n\log n$
term. In theory, we could use faster integer priority queues, for example
van-Emde Boas trees \cite{Emde77,MehNae90} which would yield running time
$\Oh{m+n\log\log n}$.

Considering the RCH query algorithm, we can partition the edge set into two
disjoint sets: edges $(u,v)$ with $\order(u)<\order(v)$ will only be considered
by the forward search and the remaining ones only in the backward search. We
organize the graph data structure in such a way that forward and backward search
can directly access the edges they need. This implies that each edge is
stored only once while ordinary bidirectional search requires us to store each
edge twice.  Hence, RCHs save a considerable amount of space.

Figure~\ref{fig:rch} gives an example graph marking the forward and backward edges and the search spaces for an example $s$-$t$ query.
\begin{figure}
\begin{center}
\includegraphics[scale=1.2]{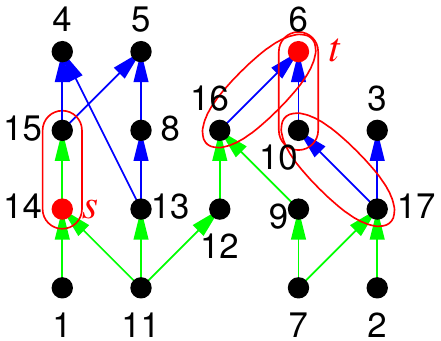}
\end{center}
\caption{\label{fig:rch}Example RCH. Edges in the forward search space are light green and those in the backward search space are dark blue. The search spaces for a query from $s$ to $t$ are circled. Node labels specify the node ordering.}
\end{figure}

\subsection{Pruning Based on Topological Levels}
\label{ss:top}

The central observation in \cite{YCZ12} on topological levels is very simple: 
\begin{lemma}
\label{lem:level}
$\displaystyle\forall u\neq v\in V\gilt L(u)\geq L(v) \Rightarrow u\not\rightarrow v$\punkt
\end{lemma}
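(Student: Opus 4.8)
The plan is to prove the contrapositive of the implication: for distinct $u,v\in V$, $u\rightarrow v$ implies $L(u)<L(v)$. Everything then reduces to the single observation that the topological level strictly increases along every graph edge.

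First I would record that $L$ is well defined. Because $G$ is a DAG, its nodes can be processed in a topological order, and when $L(v)$ is evaluated the recurrence $L(v)=1+\max\setGilt{L(u)}{(u,v)\in E}$ only refers to levels of predecessors of $v$, which have already been assigned; in particular any node with an incoming edge is governed by this recurrence (it is not a source), so the maximum is taken over a nonempty set.

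Next I would prove the single-edge step: if $(u,v)\in E$, then $v$ has an incoming edge, so $L(v)=1+\max\setGilt{L(w)}{(w,v)\in E}\geq 1+L(u)>L(u)$. This lifts to paths by a one-line induction on length: along $u=x_0\rightarrow x_1\rightarrow\cdots\rightarrow x_k=v$ with $k\geq 1$ we get $L(u)=L(x_0)<L(x_1)<\cdots<L(x_k)=L(v)$, hence $L(u)<L(v)$. Since $u\neq v$ and a DAG has no loops, every witness of $u\rightarrow v$ is a path of length at least one, so $u\rightarrow v$ forces $L(u)<L(v)$. Contrapositively, $L(u)\geq L(v)$ together with $u\neq v$ gives $u\not\rightarrow v$, which is the claim; running the same argument on the reverse graph handles backward topological levels.

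The lemma is essentially a restatement of the defining property of topological levels, so there is no genuine obstacle. The only points that need care are the orientation convention in the definition of $L$ — making sure the inequality along an edge points in the direction the lemma needs — and the well-definedness remark above; both are routine.
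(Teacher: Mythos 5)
Your proof is correct and is the standard argument; the paper itself gives no proof of this lemma (it is stated as the ``central observation'' imported from GRAIL \cite{YCZ12}), so there is nothing for it to diverge from. Your caveat about the orientation convention is well taken: the Preliminaries section defines sources as nodes ``with out-degree zero,'' which is a typo (Section~\ref{ss:rch} uses the standard convention, in-degree zero), and under the intended definition the recurrence $L(v)=1+\max\{L(u):(u,v)\in E\}$ indeed forces $L(u)<L(v)$ along every edge, so your induction on path length together with the contrapositive and the $u\neq v$ hypothesis yields exactly the stated claim.
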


We can apply Lemma~\ref{lem:level} to $v$ and $t$ whenever we consider
to explore a node $v$ in the forward search. Analogously, we can apply it to $s$
and $v$ whenever we consider a node $v$ in the backward search. Furthermore, we
can apply the same reasoning to \emph{backward topological levels}.

\begin{figure}
\begin{center}
\includegraphics[scale=1.2]{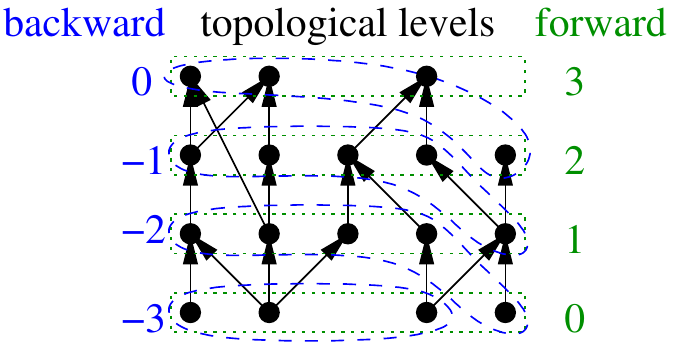}
\end{center}
\caption{\label{fig:topLevels}Forward and backward topological levels for an example graph.}
\end{figure}

Figure~\ref{fig:topLevels} gives an example for forward and backward topological
levels for the same graph as in Figure~\ref{fig:rch}.

There are several ways to compute topological levels. We use a modified DFS.  Each
node $v$ maintains a counter $c(v)$ of its unexplored incoming edges
(initialized to its in-degree) and a lower bound on its level $L(v)$ (initialized
to 0). We perform a modified DFS using the sources of $G$ as root nodes.  When
an edge $(u,v)$ is considered, we update its level to $\max(L(v),1+L(u))$ and
decrement $c(v)$. Only if $c(v)$ dropped to zero, we recurse on $v$. We also
keep track of the number of nodes reached from each source $r$ and store it as
$\mathrm{treeSize(r)}$.

We could also apply a test like Lemma~\ref{lem:level} to any topological
ordering. However, note that we profit from the fact that Lemma~\ref{lem:level} also
excludes paths between different nodes in the same level and levels may contain
many nodes. Hence, topological levels are stronger than topological orderings.

\subsection{Pruning Based on DFS Numbering}
\label{ss:dfs}

Consider a numbering $f\rightarrow 1..n$ of the nodes. One fairly general idea
is to exploit the properties of $f$ in order to store a compressed, approximate
representation of the set of nodes reachable from each node. We aim for a
rather rough approximation that can be computed in linear time and space for the
entire graph and where we can test in constant time whether a node is in this
approximated set. By applying this test everywhere during a query, we can
nevertheless obtain a significant amount of improvement. More concretely, we
will store a constant number of ranges of node numbers that are either empty or
full. When the destination node $t$ is in an empty range, the search does not have to
continue there. When $t$ is in a full range, the entire search can be stopped
with a positive result. Note the asymmetry between these two cases. For positive
queries, a positive test result has a much bigger potential for improvement.
As for topological levels, we only describe the case for forward search. For backward search, the same reasoning is applied on the backward graph.

The original version of GRAIL \cite{yildirim2010grail} is a special case of the
above approach storing a single range (with respect to finishing time of the
DFS) that must contain the target node. In a later version GRAIL \cite{YCZ12}
adds a positive range based on DFS numbering.  GRAIL achieves additional pruning
by working with several DFS searches.  We take a different approach and extract
more information from a single DFS numbering, obviating the need to compute and
store finishing times, and still getting more useful information from a single
DFS.\footnote{Of course nobody hinders us to get even better pruning from
  several DFS searches. However, we conjecture that the information in the
  finishing times is indeed redundant when using the DFS numbers in the way we
  do.}

\begin{figure}
\centering
\input{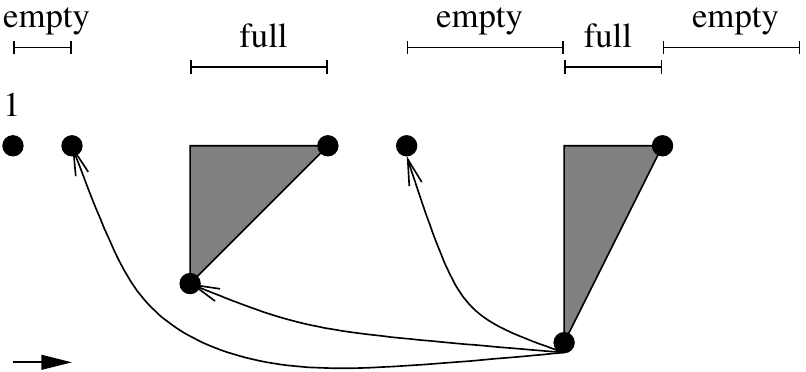_t}
\caption{\label{fig:DFS}Full and empty intervals derived from a DFS ordering $\dfsNum$. Shaded triangles indicate subtrees of the DFS tree.}
\end{figure}

Let $\dfsNum(v)$ denote the DFS number of node $v$ and
$\dfsNumMax(v)$ the largest DFS number of a node in the subtree of the DFS tree
rooted at $v$. The properties of DFS ensure that the nodes in 
$\range(v)\Is\dfsNum(v)..\dfsNumMax(v)$ are all 
reachable from $v$ (they form the subtree of
the DFS tree which is rooted at $v$) and that no
nodes with DFS number exceeding $\dfsNumMax(v)$ is reachable from
$v$. Only this property of DFS numbering is already used in GRAIL \cite{YCZ12}:
\begin{lemma}
\label{lem:phiv}
$\displaystyle\forall v,t\in V\gilt \dfsNum(t)\in \range(v)\Rightarrow  v\rightarrow t\punkt$
\end{lemma}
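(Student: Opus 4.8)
The plan is to unpack the two defining properties of $\dfsNum$ and $\dfsNumMax$ that were stated just above the lemma, and observe that they give the claim almost immediately. Recall from the discussion of DFS in the Preliminaries that $\dfsNum$ is a preordering of each tree of the DFS forest, and that the subtree $T$ rooted at any node $v$ occupies exactly the contiguous block of DFS numbers $\dfsNum(v)..\dfsNum(v)+|T|-1$. Since $\dfsNumMax(v)$ is defined as the largest DFS number appearing in that subtree, we have $\dfsNumMax(v)=\dfsNum(v)+|T|-1$, so $\range(v)=\dfsNum(v)..\dfsNumMax(v)$ is precisely the set of DFS numbers of the nodes in the subtree of the DFS tree rooted at $v$.

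First I would fix $v,t\in V$ with $\dfsNum(t)\in\range(v)$, i.e. $\dfsNum(v)\le\dfsNum(t)\le\dfsNumMax(v)$. By the characterization of the preceding paragraph, this means $t$ lies in the DFS subtree rooted at $v$. Second, I would note that every node in the DFS subtree rooted at $v$ is reachable from $v$: the subtree consists precisely of the nodes that $\Id{explore}(v)$ visits via the recursive calls, and each such recursive call follows an edge of $G$, so concatenating the tree edges along the path from $v$ down to $t$ in the DFS tree yields a directed path from $v$ to $t$ in $G$. Hence $v\rightarrow t$, which is the conclusion.

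There is essentially one subtlety worth stating carefully, namely the direction of the implication and the injectivity of $\dfsNum$. Since $\dfsNum:V\rightarrow 1..n$ is a bijection (it records the order in which the $n$ nodes are explored), $\dfsNum(t)\in\range(v)$ is genuinely equivalent to ``$t$ is in the subtree rooted at $v$'', so no spurious node can have its DFS number fall in the range. The lemma only claims one direction ($\Rightarrow$), so even the converse subtlety ($v\rightarrow t$ need not imply $\dfsNum(t)\in\range(v)$, since $t$ may be reached by a non-tree edge) does not arise. In short, there is no real obstacle here: the only ``work'' is to make explicit that the range $\range(v)$ equals the DFS-number image of the subtree rooted at $v$ — a fact already asserted in the sentence introducing the lemma — and that subtree membership implies reachability along tree edges. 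I would therefore expect the proof to be two or three sentences long, the main (mild) point being to cite the contiguity/preordering property of DFS numbers correctly rather than any genuine argument.
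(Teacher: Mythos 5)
Your proof is correct and follows exactly the argument the paper relies on: the paper states no separate proof for this lemma, instead asserting in the preceding sentence (and in the Preliminaries) that $\range(v)$ is precisely the set of DFS numbers of the subtree rooted at $v$, whose nodes are reachable from $v$ along tree edges. Your write-up just makes those two asserted facts explicit, so it matches the paper's (implicit) reasoning.
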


However, we also immediately get the following negative range:

\begin{lemma}
\label{lem:phivmax}
$\displaystyle\forall v,t\in V\gilt \dfsNum(t)> \dfsNumMax(v)\Rightarrow  v\not\rightarrow t\punkt$
\end{lemma}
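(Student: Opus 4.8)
The plan is to prove Lemma~\ref{lem:phivmax} by appealing directly to the defining property of DFS numbering that was already spelled out in the paragraph preceding Lemma~\ref{lem:phiv}: namely that $\dfsNumMax(v)$ is the largest DFS number occurring in the subtree of the DFS tree rooted at $v$, and that the set of nodes reachable from $v$ is exactly the set of descendants of $v$ in the DFS tree. Once this is granted, the statement is nearly immediate, so the ``proof'' is really just an argument that no node reachable from $v$ can have DFS number exceeding $\dfsNumMax(v)$.

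First I would recall the standard structural fact about depth-first search on a DAG (or indeed any digraph): if $v \rightarrow t$, then at the moment $v$ is first explored, $t$ has either already been explored or will be explored before $v$'s call to \Id{explore} returns; in either case, in the acyclic setting $t$ lies in the DFS subtree rooted at $v$ whenever $t$ was not yet explored when $v$ was reached, and the nodes already explored before $v$ all have DFS number $< \dfsNum(v)$. The cleanest way to package this is the white-path characterization: since $G$ is a DAG and $v\rightarrow t$ with $t\neq v$, at the time $v$ is discovered there is a directed path from $v$ to $t$ through not-yet-discovered nodes, hence $t$ becomes a descendant of $v$ in the DFS forest. Then $\dfsNum(t)$ lies in $\range(v)=\dfsNum(v)..\dfsNumMax(v)$ by the subtree-interval property ($\dfsNum$ being a preorder on each tree, descendants of $v$ occupy a contiguous block of DFS numbers ending at $\dfsNumMax(v)$). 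Taking the contrapositive: if $\dfsNum(t) > \dfsNumMax(v)$, then $t$ is not a descendant of $v$, so $v\not\rightarrow t$; and $t=v$ is impossible since $\dfsNum(v)\le\dfsNumMax(v)<\dfsNum(t)$.

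A slightly more self-contained route, avoiding invoking the white-path lemma by name, is an induction on the structure of the search: show that for every node $v$, every node reachable from $v$ by a single edge $(v,w)$ is already explored (so $\dfsNum(w)<\dfsNum(v)\le\dfsNumMax(v)$... wait, that's not right in general) --- actually the clean induction is on reachability distance and uses that \Id{explore}$(v)$ does not return until all currently-unexplored out-neighbours, and transitively their unexplored descendants, have been assigned DFS numbers, all of which are therefore $\le\dfsNumMax(v)$; any out-neighbour that was already explored when $v$ was reached got a smaller number still, but in a DAG such an already-explored out-neighbour cannot reach back to anything with number $>\dfsNumMax(v)$ either, by a second application of the same claim to that neighbour. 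I would present whichever of these two formulations is shorter; most likely I would just cite the subtree-interval property already stated in the text and give the two-line contrapositive argument.

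The main (and only) obstacle is not mathematical depth but making sure the one nontrivial ingredient --- ``the only nodes reachable from $v$ are those in the DFS subtree of $v$'' --- is legitimately available. The text states ``no node with DFS number exceeding $\dfsNumMax(v)$ is reachable from $v$'' essentially as a known property of DFS just before Lemma~\ref{lem:phiv}, so in the paper's own accounting Lemma~\ref{lem:phivmax} is just a restatement of that sentence and the proof can be a single line: it is exactly the property of DFS numbering quoted above, read in contrapositive form, together with the trivial observation that $\dfsNum(t)>\dfsNumMax(v)\ge\dfsNum(v)$ forces $t\neq v$. If a referee wanted more, the place to expand is a short justification of that DFS property via the white-path lemma, which is where I would put any additional detail.
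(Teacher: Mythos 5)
Your core instinct is right, but the route you label ``cleanest'' rests on a false premise. It is not true that ``the set of nodes reachable from $v$ is exactly the set of descendants of $v$ in the DFS tree'', nor that a white path from $v$ to $t$ exists at the moment $v$ is discovered whenever $v\rightarrow t$ in a DAG: if every $v$--$t$ path runs through a node that was already finished before $v$ was discovered, then $t$ may have $\dfsNum(t)<\dfsNum(v)$ and not be a descendant of $v$ at all. If your packaged claim were correct, Lemma~\ref{lem:phiv} would be an equivalence and the entire machinery of $\phiMin$, $\phiGap$ and $\pTree$ in Section~\ref{ss:dfs} would be pointless; the whole point of those quantities is that reachable nodes can lie strictly to the \emph{left} of $\range(v)$. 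Likewise your contrapositive step ``$t$ is not a descendant of $v$, so $v\not\rightarrow t$'' is invalid as stated. What is true, and what you also write down, is the dichotomy: a node $t$ reachable from $v$ is either already explored when $v$ is discovered (hence $\dfsNum(t)<\dfsNum(v)$) or ends up in $v$'s subtree (hence $\dfsNum(t)\le\dfsNumMax(v)$); in both cases $\dfsNum(t)\le\dfsNumMax(v)$. That dichotomy, not a descendant characterization, is the usable statement, and making it rigorous still requires an argument --- your parenthetical ``second application of the same claim to that neighbour'' is really an induction over finishing times, using acyclicity to rule out grey out-neighbours.

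The paper proves exactly this, but packages the induction as a minimal-counterexample argument: take a $v$--$t$ path and the \emph{first} edge $(u,w)$ on it with $\dfsNum(w)>\dfsNumMax(v)$; then $u$ finishes no later than $v$ does while $w$ is still unexplored at that point, so the call \Id{explore}$(u)$ would have recursed on $w$ and assigned it a smaller number --- contradiction. So your second, ``self-contained'' route is essentially the paper's proof; you should promote it to the main argument, drop the claim that reachability coincides with tree descendancy, and handle the already-explored case explicitly rather than hiding it inside an appeal to the white-path theorem.
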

\begin{proof}
  Assume to the contrary that for any node $v$ there is a node $t$ such that
  $v\rightarrow t$ yet $\dfsNum(t)>\dfsNumMax(v)$. Consider a path $P$ from $v$
  to $t$ and the first edge $(u,w)$ on $P$ with $\dfsNum(w)>\dfsNumMax(v)$.  This
  implies that $w$ is still unexplored when DFS finishes exploring $v$.  In
  particular, when DFS explored $u$, $w$ was not explored yet.  Hence, DFS
  should then have explored $w$ recursively -- a contradiction.
\end{proof}

Indeed, for any node $w$ with $v\rightarrow w$, $\range(w)$ yields a range that
we can use for positive pruning. We propose to actually compute and store the
node outside $\range(v)$ with the \emph{largest} such range (or $\bot$ if no
such node exists). It turns out this can be done while computing the DFS
numbering.

\begin{lemma}
\label{lem:ptree}
  For any $v\in V$, consider the node $w=\pTree(v)$ with $v\rightarrow w$ and
  $w\not\in\range(v)$ which maximizes
  $|\range(w)|$. When DFS on $v$ finishes, $w$ can be computed as
  \begin{align*}
  w\Is\maxind&\setGilt{|\range(\pTree(u))|}{(v,u)\in E\wedge\pTree(u)\neq\bot}\cup\\
                     &\setGilt{|\range(u)|}{(v,u)\in E\wedge \dfsNum(u)<\dfsNum(v)}\punkt
  \end{align*}
\end{lemma}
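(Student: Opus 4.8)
The plan is to show that the node returned by the right–hand side has the two defining properties of $\pTree(v)$ — it is reachable from $v$, it lies outside $\range(v)$, and it maximises $|\range(\cdot)|$ among all such nodes — and that both sides equal $\bot$ exactly when no such node exists. It is convenient to read $\maxind$ as ranging over the candidate \emph{nodes} $\setGilt{\pTree(u)}{(v,u)\in E,\ \pTree(u)\neq\bot}\cup\setGilt{u}{(v,u)\in E,\ \dfsNum(u)<\dfsNum(v)}$, returning one with largest $|\range|$. I would prove ``$\le$'' (the formula never overshoots) and ``$\ge$'' (the true maximiser is among the candidates) separately, after first checking the well-definedness issue: for every out-neighbour $u$ of $v$, the value $\pTree(u)$ is already available when \Id{explore}$(v)$ returns. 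Indeed, when the edge $(v,u)$ is scanned, $u$ is either unexplored (so it becomes a proper descendant of $v$ and \Id{explore}$(u)$ finishes before \Id{explore}$(v)$), on the current recursion path (impossible, since then $u\to v$ would close a cycle in the DAG $G$), or already finished (then $\dfsNum(u)<\dfsNum(v)$ and $u$ was completed before $v$ was discovered).

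For ``$\le$'' it suffices to observe that every candidate node is reachable from $v$: a first–set node $\pTree(u)$ because $v\to u$ and, by definition of $\pTree$, $u\to\pTree(u)$; a second–set node $u$ trivially. Hence the formula's maximum cannot exceed $|\range(\pTree(v))|$.

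For ``$\ge$'', let $w=\pTree(v)\neq\bot$. By Lemma~\ref{lem:phivmax} every node reachable from $v$ has DFS number at most $\dfsNumMax(v)$, so $w\notin\range(v)$ forces $\dfsNum(w)<\dfsNum(v)$. Pick any path from $v$ to $w$ and let $(v,u)$ be its first edge. If $u=w$, then $\dfsNum(u)<\dfsNum(v)$, so $u$ is a second–set candidate attaining $|\range(w)|$. Otherwise $u\to w$. If $w\in\range(u)$ then $\range(w)\subseteq\range(u)$; this can happen only when $(v,u)$ is a cross edge (a descendant $u$ would give $\range(u)\subseteq\range(v)$, contradicting $w\notin\range(v)$), so $u$ is again a second–set candidate with $|\range(u)|\ge|\range(w)|$. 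If $w\notin\range(u)$, then $w$ meets exactly the conditions ($u\to w$, $w\notin\range(u)$) defining the set over which $\pTree(u)$ is the maximiser, so $\pTree(u)\neq\bot$ and $|\range(\pTree(u))|\ge|\range(w)|$, and $\pTree(u)$ is a first–set candidate. Combining the two bounds, the value computed by the formula equals $|\range(\pTree(v))|$, and any node achieving it is reachable from $v$; when no node reachable from $v$ lies outside $\range(v)$, both candidate sets are empty and the formula returns $\bot$, matching $\pTree(v)=\bot$.

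The step I expect to require the most care is the last case above when the witnessing first edge $(v,u)$ is a tree or forward edge, so that $u$ is a proper descendant of $v$: there the first–set term $\pTree(u)$ has a large enough range but could a priori be itself a descendant of $v$, hence lie \emph{inside} $\range(v)$, so that $\maxind$ might not hand back an admissible answer for $\pTree(v)$. The lever I would use is that any node reachable from $u$ yet outside $\range(u)$ has DFS number strictly below $\dfsNum(u)$ (otherwise, combined with Lemma~\ref{lem:phivmax}, it would fall inside $\range(u)$); from this one argues — by following the path from $u$ to the edge at which it first leaves $\range(v)$ and invoking acyclicity of $G$ at the cross edge occurring there — that a candidate lying outside $\range(v)$ with range at least $|\range(\pTree(v))|$ is in fact among the formula's terms. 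Pinning this case down precisely, rather than settling for the weaker statement that the returned node is merely reachable from $v$ (which is all the query algorithm needs for soundness of its positive pruning), is the delicate part of the argument.
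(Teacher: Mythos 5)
Your instinct about where the difficulty sits is exactly right, and that is precisely where the argument does not close. The paper's own proof is only a two-sentence outline (``follows easily by induction on the structure of the DAG''), so there is no detailed argument to match against; but your analysis already shows the inductive step is not routine, and the case you flag as ``delicate'' cannot in fact be repaired. The first crack is in your ``$\le$'' direction: from ``every candidate is reachable from $v$'' you infer ``the formula's maximum cannot exceed $|\range(\pTree(v))|$'', but $\pTree(v)$ maximises $|\range(\cdot)|$ only over reachable nodes \emph{outside} $\range(v)$, so a candidate lying \emph{inside} $\range(v)$ is not dominated by $\pTree(v)$ at all. And such candidates do occur. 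Take sources $r$ and $v$ with $r$ explored first, edges $(r,z)$, $(v,a)$, $(a,x)$, $(x,y)$, $(v,u)$, $(u,x)$, $(u,z)$, and let DFS scan $(v,a)$ before $(v,u)$, giving $\dfsNum = 1,2,3,4,5,6,7$ for $r,z,v,a,x,y,u$. Then $\pTree(u)=x$ (both $x$ and $z$ are second-set candidates for $u$, and $|\range(x)|=2>1=|\range(z)|$), while $\pTree(a)=\bot$ and the second set for $v$ is empty; the formula therefore returns $x\in\range(v)$. Yet the only node reachable from $v$ outside $\range(v)=3..7$ is $z$, so the lemma's $w$ is $z$ --- which is not even among the formula's candidates, having been discarded at $u$ in favour of $x$. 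Neither the admissibility claim $w\notin\range(v)$ nor the maximality claim survives, and no ``lever'' can recover $z$ from information the recursion no longer carries. (The same mechanism, with the node $z$ deleted, makes the formula return $x$ while $\pTree(v)=\bot$, so your final claim that the two sides agree on $\bot$ also fails.)

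What your argument does establish, and what is worth salvaging, is the soundness half: every node the formula can return is reachable from $v$ (your ``$\le$'' observation), and the computed value is at least $|\range(\pTree(v))|$ (your ``$\ge$'' case analysis, which is correct as a lower bound). That is all the query algorithm needs for its positive pruning to be correct --- the stored range is always a full range --- so the algorithm survives; the stored node may simply lie inside $\range(v)$ and then prunes nothing beyond what Lemma~\ref{lem:phiv} already gives. To have a true statement to prove, either weaken the lemma to ``the computed $w$ satisfies $v\rightarrow w$'', or modify the recursion so that candidates falling inside the range of the current node's ancestors are filtered out and a suitable runner-up is retained; as written, the identity you were asked to verify is not one an induction can establish.
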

\begin{proof}(Outline)
  The above equation follows easily by induction on the structure of the DAG,
  starting with the sinks of $G$. Moreover, all required values are known when
  DFS on $v$ finishes.
\end{proof}

Similarly, we can exploit information available during the DFS to infer information that yields the empty interval to the left of any node reachable from $v$.

\begin{lemma}
\label{lem:phiMin}
  For any $v\in V$, let $\phiMin(v)$ denote the smallest DFS number of a node
  reachable from $v$. When DFS on $v$ finishes, $\phiMin(v)$ can be computed as
  $$
  \phiMin(v)\Is\min\setGilt{\phiMin(w)}{(v,w)\in E}\cup\set{\dfsNum(v)}\punkt
  $$
\end{lemma}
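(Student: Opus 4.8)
The plan is to establish the recurrence for $\phiMin(v)$ by a two-inclusion argument, showing that the claimed right-hand side is both a lower bound and an upper bound for the true smallest DFS number reachable from $v$. Recall $\phiMin(v)$ is defined as $\min\setGilt{\dfsNum(t)}{v\rightarrow t}$, which is well-defined since $v\rightarrow v$.

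First I would show that every node $t$ with $v\rightarrow t$ satisfies $\dfsNum(t)\geq$ the claimed right-hand side. If $t=v$ this is immediate since $\dfsNum(v)$ appears in the set over which we minimize. Otherwise, take any path from $v$ to $t$ and let $(v,w)$ be its first edge; then $w\rightarrow t$, so by the definition of $\phiMin(w)$ we have $\dfsNum(t)\geq\phiMin(w)$, and $\phiMin(w)$ appears in the set. Hence the true minimum is at least the claimed value. Conversely, I would show the claimed right-hand side is itself the DFS number of some node reachable from $v$: the term $\dfsNum(v)$ corresponds to $v$ itself, and for each out-neighbour $w$ of $v$, $\phiMin(w)=\dfsNum(t_w)$ for some $t_w$ with $w\rightarrow t_w$, hence $v\rightarrow t_w$. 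So every value in the set is $\dfsNum$ of a node reachable from $v$, and therefore the true minimum $\phiMin(v)$ is at most the claimed minimum. The two inclusions give equality.

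The remaining claim is the algorithmic one: all values $\phiMin(w)$ for $(v,w)\in E$ are available when DFS on $v$ finishes. Here I would argue as in Lemma~\ref{lem:ptree}: process the computation bottom-up over the DAG starting from the sinks (for which the recurrence degenerates to $\phiMin(v)=\dfsNum(v)$), and observe that when the recursive call \Id{explore}$(v)$ returns, every out-neighbour $w$ of $v$ has already been fully explored --- either $w$ was explored during the recursion rooted at $v$ (a tree or forward edge), or $w$ was explored earlier (a cross edge; back edges cannot occur in a DAG). In both cases $\phiMin(w)$ has been finalized, so the $\min$ can be evaluated at that moment.

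I expect the only subtle point to be the bottom-up/finish-time bookkeeping --- specifically, justifying that a cross edge $(v,w)$ always points to an already-completed node, which relies on the standard DFS fact that in a DAG there are no back edges and that cross edges go to vertices whose exploration finished before $v$'s began. The min-inclusion argument itself is routine. One could alternatively phrase the whole proof as a single induction on the DAG structure (as the authors do for Lemma~\ref{lem:ptree}), folding correctness and availability together; I would keep them separate for clarity.
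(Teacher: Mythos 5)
Your proof is correct and takes essentially the same route as the paper's (very terse) argument: the paper merely states that the proof is analogous to that of Lemma~\ref{lem:ptree}, i.e., the recurrence follows by induction on the DAG starting from the sinks, and all required values are known when DFS on $v$ finishes. Your two-inclusion argument is just the induction step written out in full (the reachability set of $v$ decomposes into $\set{v}$ and the reachability sets of its out-neighbours, and $\min$ commutes with this union), and your edge-classification argument (no back edges in a DAG; tree, forward and cross edges all point to vertices that finish before $v$ does) correctly supplies the availability claim that the paper asserts without proof.
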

The proof is analogous to the proof of Lemma~\ref{lem:ptree}.

Finally, we compute the following empty range just to the left of $v$;
\begin{lemma}
\label{lem:phiGap}
When DFS finishes for $v$, define
\begin{align*} 
\phiGap(v)\Is\max&\setGilt{\dfsNumMax(w)}{(v,w)\in E\wedge \dfsNum(w)<\dfsNum(v)}\cup\\
                 &\setGilt{\phiGap(w)}{(v,w)\in E}\punkt\\
%
\end{align*}
Then $\phiGap(v)+1..\dfsNum(v)$ is an empty range. 
\end{lemma}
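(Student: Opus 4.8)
The plan is to prove the following sharper statement, which is exactly what ``empty range'' means here: for every $v\in V$ and every node $t\neq v$ with $v\rightarrow t$ and $\dfsNum(t)<\dfsNum(v)$ we have $\dfsNum(t)\le\phiGap(v)$; equivalently, no node other than $v$ whose DFS number lies in $\phiGap(v)+1..\dfsNum(v)$ is reachable from $v$. I would proceed by induction on the structure of the DAG starting from the sinks (equivalently, in reverse DFS-finishing order), exactly as in Lemmas~\ref{lem:ptree} and~\ref{lem:phiMin}. Note first that this induction is legitimate: when DFS finishes at $v$, every out-neighbour $w$ of $v$ has already finished --- if $\dfsNum(w)>\dfsNum(v)$ then $(v,w)\in E$ forces $w$ to be a proper descendant of $v$ in the DFS tree and hence finished before $v$; if $\dfsNum(w)<\dfsNum(v)$ then, since $G$ is a DAG, $w$ cannot be an ancestor of $v$ (a tree path $w\rightsquigarrow v$ plus the edge $(v,w)$ would be a cycle), so $v$ lies outside the contiguous block $\dfsNum(w)..\dfsNumMax(w)$ occupied by $w$'s subtree, giving $\dfsNum(v)>\dfsNumMax(w)$ and hence that all of $w$'s subtree, including $w$, finished before $v$ was even entered. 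Thus all the values $\phiGap(w)$ and $\dfsNumMax(w)$ appearing in the definition of $\phiGap(v)$ are available, and the recursive invocations below are permitted. The base case is a sink $v$: both index sets are empty (read $\max\emptyset=0$), $v$ reaches only itself, and the claim is vacuous.

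For the inductive step, fix $v$ with outgoing edges and a node $t\neq v$ with $v\rightarrow t$ and $\dfsNum(t)<\dfsNum(v)$. Since $t\neq v$, a path witnessing $v\rightarrow t$ has a first edge $(v,w)$, and $w\rightarrow t$ (via the remainder of the path, possibly $w=t$); also $w\neq v$ as $G$ is a DAG. I split on the DFS position of $w$. If $\dfsNum(w)<\dfsNum(v)$, then $\dfsNumMax(w)$ occurs in the first index set defining $\phiGap(v)$, and Lemma~\ref{lem:phivmax} applied to $w$ and $t$ yields $\dfsNum(t)\le\dfsNumMax(w)\le\phiGap(v)$ (no inductive hypothesis needed here). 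If $\dfsNum(w)>\dfsNum(v)$ --- the case $\dfsNum(w)=\dfsNum(v)$ being impossible since $w\neq v$ --- then, as noted, $w$ is a proper descendant of $v$, so $\dfsNum(t)<\dfsNum(v)<\dfsNum(w)$; the inductive hypothesis applied to $w$ and $t$ gives $\dfsNum(t)\le\phiGap(w)$, and $\phiGap(w)$ occurs in the second index set defining $\phiGap(v)$, so $\dfsNum(t)\le\phiGap(w)\le\phiGap(v)$. In either case $\dfsNum(t)\le\phiGap(v)$, which completes the induction.

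The main obstacle is not depth but bookkeeping: one must get the elementary DFS structure straight --- a DAG has no back edges, a DFS subtree occupies a contiguous block of DFS numbers, and all out-neighbours of $v$ finish before $v$ --- and then realise that the ``backward'' sub-case ($\dfsNum(w)<\dfsNum(v)$) is handled by Lemma~\ref{lem:phivmax} alone while only the ``descendant'' sub-case needs the inductive hypothesis. The one cosmetic point to settle is the convention at the right endpoint $\dfsNum(v)$ and at $\max\emptyset$; with $\max\emptyset=0$ and the understanding that the trivial reachability $v\rightarrow v$ is not counted as an element of an empty range, the displayed statement holds verbatim.
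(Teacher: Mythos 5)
Your proof is correct and takes essentially the same route the paper sketches: an induction over the DAG in reverse topological order (``starting with the sinks''), where the case split on the first edge of a witnessing path corresponds exactly to the two sets in the definition of $\phiGap(v)$. You merely supply the DFS bookkeeping (no back edges in a DAG, contiguity of subtree preorder numbers, Lemma~\ref{lem:phivmax} for the cross-edge case, and well-definedness of the recursion at finishing time) that the paper's one-sentence outline leaves implicit.
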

\begin{proof}(Outline)
Similar to the proof of Lemma~\ref{lem:ptree} the definition of $\phiGap(v)$ ensures that
there can be no node between $\phiGap(v)$ and $\phi(v)$ reachable from $v$ -- otherwise,
$\phiGap(v)$ would attain a larger value.\frage{besseren Beweis?}
\end{proof}

Figure~\ref{fig:DFS} summarizes lemmas \ref{lem:level}-\ref{lem:phiGap}.

There are many ways to define a DFS ordering: We are free to choose the order in
which we scan the nodes for starting recursive exploration and we can choose the
order in which we inspect edges leaving a node being explored. Indeed, we could
compute several DFS orderings and use all of them for pruning searches.  Our
current implementation uses only a single ordering thus minimizing
preprocessing time and space. We do not have very strong heuristics for finding
good orderings but there is one heuristics that seems to be useful: Make sure
that most nodes are in a small number of trees because this leads to large
positive intervals. It therefore makes sense to only uses sources of the graph
as tree roots. In addition, we order the sources by the number of nodes reached
from them during the search for topological levels described in
Section~\ref{ss:top}.\frage{reconstructing paths?}

\section{Implementation Details}
\label{s:impl}

We have implemented \algname\ using C++. For the priority queues in CH-ordering,
we use the cache efficient implementation from \cite{San00b}. \frage{optional: sth on actual overhead of PQ?} All nodes, edges,
and orderings are encoded using 32-bit integers.  For each edge we store its
destination node.  For each node $v$ we store the following information: the
first edge for the forward search space, $L(v)$, $\dfsNum(v)$, $\dfsNumMax(v)$,
$\phiMin(v)$, $\phiGap(v)$, $\dfsNum(\pTree(v))$, and
$\dfsNumMax(\pTree(v))$. For all these node-values, we need separate information
for forward and backward search. Note that there is no need to store the order
in the CH since the separation into forward and backward edges is all we
need. All these integers are stored consecutively in memory so that they can be
accessed quite cache efficiently.  Note that the obvious space saving measure of
storing only $\pTree(v)$ rather than both $\dfsNum(\pTree(v))$, and
$\dfsNumMax(\pTree(v))$ would incur additional cache faults during the query.
The overall resulting space complexity of the index data structure is $4m+64n$
bytes.

During a query, the pruning rules resulting from Lemmas
\ref{lem:level}-\ref{lem:phiGap} are applied aggressively,
i.e., for both forward and backward search, and even before a node is queued.
Both forward and backward topological levels are used.

\section{Experiments}
\label{s:exp}

All experiments have been performed using a single core of an Intel Xeon X5550
running at 2.67GHz with 8MB Level3 cache, 256kB Level 2 cache and 48GB of DDR3
RAM. The system ran Ubuntu 12.04.2 using a Linux kernel 3.5. The code has been
compiled using gcc 4.8.2 with optimization level {\tt O3}.

As far as sensible, we adopt instances and measurement conventions from previous
work to improve comparability. In table, best values are bold. K and M are
shorthands for 000 and 000\,000 respectively.

\subsection{Instances}
\begin{table}\footnotesize
\caption{\label{tab:Graphs}Instances used for our experiments. $d$ is the maximal path length.}
  \centering
	{\setlength{\tabcolsep}{0.4em}
    \begin{tabular}{ l |r r r r r }
      Dataset & Nodes & Edges & $m/n$ & $d$ & \%pos\\\hline
      \multicolumn{6}{@{}l}{\textbf{Kronecker}}\\\hline
      kron12 & $2^{12}$ & 117K & 28.60 & 279 & 28\\
      kron17 & $2^{17}$ & 5069K&38.68 & 1354 & 19\\
      kron22 & $2^{22}$ & 184M & 43.95 & 5821 & 13\\\hline
      \multicolumn{6}{@{}l}{\textbf{large random}}\\\hline
      rand100m5x & 100M & 500M & 5 & 37 & 0.0\\
      rand100m2x & 100M & 200M & 2 & 21 & 0.0\\
      rand10m10x & 10M & 100M & 10 & 60 & 5.0\\
      rand10m5x & 10M & 50M & 5 & 35 & 0.0\\
      rand10m2x & 10M & 20M & 2 & 19 & 0.0\\
      rand1m10x & 1M & 10M & 10 & 59 & 10\\
      rand1m5x & 1M & 5M & 5 & 33 & 0.2\\
      rand1m2x & 1M & 2M & 2 & 19 & 0.0\\\hline
      \multicolumn{6}{@{}l}{\textbf{large real}}\\\hline
      citeseer & 694K & 312K & 0.45 & 13 & 0.0\\
      citeseerx & 6\,540K & 15M & 2.30 & 59 & 0.2\\
      cit-patents & 3\,775K & 17M & 4.38 & 32 & 0.1\\
      go-uniprot & 6\,968K & 35M & 4.99 & 21 & 0.0\\
      uniprot22m & 1\,595K & 1\,595K & 1.00 & 4 & 0.0\\
      uniprot100m & 16M & 16M & 1.00 & 9 & 0.0\\
      uniprot150m & 25M & 25M & 1.00 & 10 & 0.0\\\hline
      \multicolumn{6}{@{}l}{\textbf{small real dense}}\\\hline
      arxiv & 6\,000 & 67K & 11.12 & 167 & 15\\
      citeseer-sub & 11K & 44K & 4.13 & 36 & 0.4\\
      go & 6\,793 & 13K & 1.97 & 16 & 0.2\\
      pubmed & 9\,000 & 40K & 4.45 & 19 & 0.7\\
      yago & 6\,642 & 42K & 6.38 & 13 & 0.2\\\hline
      \multicolumn{6}{@{}l}{\textbf{small real sparse}}\\\hline
      agrocyc & 13K & 14K & 1.07 & 16 & 0.1\\
      amaze & 3\,710 & 3\,947 & 1.06 & 16 & 17\\
      anthra & 12K & 13K & 1.07 & 16 & 0.1\\
      ecoo & 13K & 14K & 1.08 & 22 & 0.1\\
      human & 39K & 40K & 1.01 & 18 & 0.0\\
      kegg & 3\,617 & 4\,395 & 1.22 & 26 & 20\\
      mtbrv & 9\,602 & 10K & 1.09 & 22 & 0.2\\
      nasa & 5\,605 & 6\,538 & 1.17 & 35 & 0.6\\
      vchocyc & 9\,491 & 10K & 1.09 & 21 & 0.1\\
      xmark & 6\,080 & 7\,051 & 1.16 & 38 & 1.4\\\hline
      \multicolumn{6}{@{}l}{\textbf{stanford}}\\\hline
      email-EuAll & 231K & 223K & 0.97 & 7 & 5\\
      p2p-Gnutella31 & 48K & 55K & 1.15 & 14 & 0.8\\
      soc-LiveJournal1 & 971K & 1\,024K & 1.05 & 24 & 21\\ 
      web-Google & 372K & 518K & 1.39 & 34 & 15\\
      wiki-Talk & 2\,282K & 2\,312K & 1.01 & 8 & 0.8\\\hline
    \end{tabular}
  }

\end{table}

We use graphs of five categories, largely adopted from
\cite{YCZ12,jin2011path,cheng2013tf}.  Table~\ref{tab:Graphs} summarizes their
properties.  In addition we have added \emph{Kronecker graphs} as a family of
graphs that have become a standard in benchmarking graph algorithms and can be
generated with arbitrary size. Besides the number of nodes and
edges we give a number of further important parameters. We see that the
\emph{edge density} $m/n$ is very small (even close to one) for many
instances. We will see that the graphs with larger $m/n$ can be much more
difficult to handle.  Another column gives the length $d$ of the longest path
which turns out to be fairly small for all instances except the Kronecker graphs.
Finally, we indicate the fraction of positive queries in a random sample of
100\,000 queries.  It turns out that this fraction is close to zero for most
instances. Since an application is not guaranteed to have the same small rate of
positive queries, we explicitly use specially generated positive query instances
in our experiments.
\clearpage
\begin{description}
\item[Kronecker:] We interpret the graphs generated by the RMAT generator used
  in the Graph500 benchmark \cite{murphy2010introducing} as DAGs, i.e.,
  interpreting an edge $\set{u,v}$ as the directed edge\\ $(\min(u,v),
  \max(u,v))$.  The name \texttt{kron}$\{x\}$ stands for a Kronecker graph with
  $2^x$ nodes generated using the default parameters from the Graph500
  benchmark.
\item[Large Random:] These graphs are randomly generated DAGs.  They are graphs
  with given number of nodes and $m$ edge pairs chosen independently at random
  \cite{YCZ12}.  We use the naming scheme
  \texttt{rand}$\{n\}\{D\}$\texttt{x}, where $n$ is the number of nodes (e.g.:
  \texttt{1m} for 1 million) and $D=m/n$. We also look at larger instances
  from this family than in previous works.
\item[Large Real:] The authors of GRAIL~\cite{YCZ12} introduced
  seven larger graphs to demonstrate their scaling abilities.  \texttt{citeseer,
    citeseerx} and \texttt{cit-patents} are citation networks,
  \texttt{go-uniprot} is a joint graph of Gene Ontology terms and the
  annotations file from the UniProt database. The other graphs
  (\texttt{uniprot22m, uniprot100m, uniprot150m}) are subsets of the UniProt RDF
  graph.
\item[Small Real Dense:] These graphs with less than 67\,000 edges are mostly
  obtained from citation networks (\texttt{pubmed, citeseer, arxiv}) \cite{jin20093}. 
  They have larger $m/n$ which makes
  them more difficult to handle for reachability indices.
\item[Small Real Sparse:] These graphs have an average degree less than $1.22$
  and less then 40\,000 nodes.  \texttt{xmark} and \texttt{nase} represent XML
  documents; \texttt{amaze} and \texttt{kegg} are metabolic networks. The others
  are from from BioCyc \cite{YCZ12}. They
  represent pathway and genome databases.
\item[Stanford:] These graphs from the Stanford Large Network Dataset Collection
  were initially used for evaluating TF \cite{cheng2013tf}. Thy represent an
  email network from an EU research institution (\texttt{email-EuAll}), the
  Gnutella peer to peer network from August 31 2002 (\texttt{p2p-Gnutella31}),
  the LiveJournal online social network (\texttt{soc-LiveJournal1}), a
  (contracted) web graph from Google (\texttt{web-Google}) and the Wikipedia
  communication network (\texttt{wiki-Talk}).
\end{description}

Most experiments average times for 100\,000 $s$-$t$ reachability queries. We
use three different types:
\begin{description}
\item[random:] Random queries provided by randomly picking $s,t
  \in V(G), s \neq t$
\item[positive:] Positive queries obtained by picking a nonisolated $s\in V$ uniformly at random
  and then picking a node reachable from $s$ uniformly at random.
\item[negative:] Negative queries obtained by randomly picking $s \in V$
  uniformly at random and then picking a node not reachable from $s$ uniformly at random.
\end{description}
We distinguish between positive and negative queries since these can behave very
differently, in particular some instances have very few positive queries among
random queries so that only measuring random queries almost ignore positive
queries which may be very important in real applications.
\subsection{Contributions of the Individual Heuristics}

\begin{table}\footnotesize
\caption{\label{tab:heuristics}Average query time of bidirectional BFS for positive and negative queries compared to \algname\ and its constituent heuristics.}
  \centering
	{
	\begin{tabular}{@{}l | r r | r r | r r | r r | r r@{}}
		\multicolumn{1}{c}{} & \multicolumn{2}{ c |}{bidir. BFS} & \multicolumn{2}{ c |}{RCH} & \multicolumn{2}{ c |}{Levels} & \multicolumn{2}{ c |}{DFS-Trees} & \multicolumn{2}{ c }{PReaCH}\\
		\multicolumn{1}{c}{} & \multicolumn{2}{ c |}{\tiny avg. query time in ns} & \multicolumn{2}{ c |}{\tiny speedup vs BFS} & \multicolumn{2}{ c |}{\tiny speedup vs BFS} & \multicolumn{2}{ c |}{\tiny speedup vs BFS} & \multicolumn{2}{ c }{\tiny speedup vs BFS}\\
\cline{2-11}		\multicolumn{1}{c}{} & $+$ & $-$ & $+$ & $-$ & $+$ & $-$ & $+$ & $-$ & $+$ & $-$ \\
[0.2em]\hline		kron12 & 31K & 42K & 3.20 & 1.56 & 1.61 & 2\,688 & 282 & 53.26 & \textbf{668} & \textbf{3\,440}\\
		kron17 & 2\,483K & 2\,286K & 4.91 & 1.21 & 2.03 & 56K & 638 & 1\,490 & \textbf{4\,723} & \textbf{72K}\\
		kron22 & 130M & 198M & 4.00 & 2.17 & 1.06 & 864K & 780 & 6\,544 & \textbf{97K} & \textbf{2\,330K}\\
[0.2em]\hline		random100m5x & 940K & 748K & 20.44 & 25.41 & 2.93 & 11.73 & 5.53 & 14.72 & \textbf{29.48} & \textbf{61.60}\\
        random100m2x & 6\,475 & 8\,166 & 1.44 & 1.49 & 1.12 & 2.92 & 2.21 & 3.09 & \textbf{2.84} & \textbf{5.71}\\
		random10m10x & 159M & 91M & 131 & 143 & 5.58 & 46.72 & 7.66 & 30.63 & \textbf{203} & \textbf{638}\\
		random10m5x & 415K & 301K & 25.34 & 30.65 & 2.66 & 12.89 & 4.78 & 13.04 & \textbf{29.15} & \textbf{62.24}\\
		random10m2x & 2\,156 & 2\,318 & 1.78 & 1.88 & 1.22 & 3.25 & 2.63 & 3.19 & \textbf{3.03} & \textbf{3.96}\\
		random1m10x & 18M & 11M & 43.90 & 32.98 & 5.78 & 54.48 & 8.79 & 33.37 & \textbf{91.69} & \textbf{277}\\
		random1m5x & 284K & 192K & 24.30 & 28.03 & 2.70 & 12.80 & 4.83 & 12.83 & \textbf{28.97} & \textbf{53.94}\\
		random1m2x & 1\,349 & 1\,420 & 2.02 & 2.05 & 1.25 & 4.82 & 3.63 & 4.55 & \textbf{4.48} & \textbf{6.67}\\
[0.2em]\hline		cit-Patents & 381K & 131K & 21.90 & 17.33 & 1.72 & 18.83 & 2.70 & 14.80 & \textbf{26.69} & \textbf{86.99}\\
		citeseer & 101K & 143 & 708 & 0.90 & 0.96 & 2.72 & \textbf{3\,042} & 1.67 & 2\,530 & \textbf{4.07}\\
		citeseerx & 3\,727K & 1\,459K & 2\,987 & 1\,075 & 1.19 & 4\,881 & 1\,418 & 3\,688 & \textbf{7\,611} & \textbf{11K}\\
		go-uniprot & 18M & 1\,812 & 28K & 1.84 & 1.41 & 3.97 & 2.78 & 1.48 & \textbf{41K} & \textbf{33.10}\\
		uniprotenc-22m & 3\,661K & 126 & 27K & 0.52 & 1.35 & \textbf{6.14} & \textbf{128K} & 0.91 & 97K & 3.85\\
		uniprotenc-100m & 45M & 559 & 82K & 0.74 & 1.57 & 6.23 & \textbf{1\,059K} & 1.07 & 851K & \textbf{7.22}\\
		uniprotenc-150m & 71M & 849 & 90K & 0.79 & 1.46 & 5.98 & \textbf{1\,446K} & 1.31 & 1\,193K & \textbf{8.25}\\
[0.2em]\hline		arxiv & 35K & 30K & 5.23 & 1.44 & 2.17 & 109 & 16.00 & 33.92 & \textbf{88.03} & \textbf{168}\\
		citeseer-sub & 2\,835 & 1\,236 & 9.89 & 2.50 & 1.47 & 7.13 & 4.97 & 6.07 & \textbf{22.70} & \textbf{20.47}\\
		go & 93.51 & 422 & 1.03 & 1.39 & 0.85 & 11.46 & 1.83 & 5.04 & \textbf{1.89} & \textbf{13.91}\\
		pubmed & 5\,076 & 1\,589 & 10.82 & 2.41 & 1.48 & 22.42 & 2.13 & 10.24 & \textbf{25.44} & \textbf{31.09}\\
		yago & 4\,968 & 185 & 67.52 & 1.02 & 0.88 & 2.25 & 26.07 & 2.87 & \textbf{112} & \textbf{9.64}\\
[0.2em]\hline		agrocyc & 1\,173 & 225 & 17.06 & 3.89 & 1.03 & \textbf{39.57} & \textbf{114} & 7.55 & 90.88 & 31.05\\
		amaze & 2\,218 & 5\,070 & 53.78 & 121 & 1.21 & \textbf{609} & \textbf{229} & 165 & 179 & 597\\
		anthra & 879 & 206 & 13.19 & 3.67 & 0.97 & \textbf{37.19} & \textbf{83.03} & 6.89 & 66.72 & 29.74\\
		ecoo & 1\,256 & 243 & 16.19 & 4.05 & 1.16 & \textbf{42.92} & \textbf{119} & 8.29 & 93.08 & 33.79\\
		human & 1\,034 & 173 & 14.12 & 3.20 & 1.05 & \textbf{26.27} & \textbf{96.90} & 3.15 & 77.92 & 23.18\\
		kegg & 2\,288 & 5\,991 & 46.78 & 128 & 1.34 & 492 & \textbf{222} & 191 & 169 & \textbf{668}\\
		mtbrv & 1\,083 & 242 & 14.36 & 4.11 & 1.08 & \textbf{41.52} & \textbf{107} & 7.71 & 82.63 & 33.61\\
		nasa & 255 & 540 & 3.20 & 4.67 & 1.14 & \textbf{26.24} & 4.22 & 10.77 & \textbf{11.08} & 25.47\\
		vchocyc & 904 & 228 & 11.92 & 3.83 & 0.99 & \textbf{39.18} & \textbf{84.16} & 7.82 & 71.12 & 31.29\\
		xmark & 734 & 1\,366 & 2.19 & 9.15 & 1.31 & 62.33 & 4.33 & 29.11 & \textbf{17.63} & \textbf{62.87}\\
[0.2em]\hline		email-EuAll & 250K & 45K & 2\,462 & 375 & 1.31 & 1\,482 & \textbf{8\,367} & 579 & 8\,267 & \textbf{1\,648}\\
		p2p-Gnutella31 & 225 & 7\,133 & 4.21 & 116 & 0.88 & \textbf{843} & \textbf{7.25} & 191 & 6.49 & 710\\
		soc-LiveJournal1 & 720K & 1\,719K & 3\,142 & 7\,908 & 1.35 & 34K & \textbf{14K} & 11K & 14K & \textbf{56K}\\
		web-Google & 239K & 542K & 1\,351 & 2\,891 & 1.05 & 10K & \textbf{4\,723} & 4\,709 & 4\,688 & \textbf{12K}\\
		wiki-Talk & 27K & 283K & 136 & 1\,159 & 1.08 & \textbf{8\,140} & \textbf{500} & 2\,031 & 371 & 8\,128\\
	\end{tabular}
}

\end{table}

Table~\ref{tab:heuristics} compares bidirectional BFS with RCHs alone
(Section~\ref{ss:rch}), topological levels alone (Section~\ref{ss:top}), DFS
numberings alone (Section~\ref{ss:dfs}), and \algname\ (the combination of all
heuristics). We use BFS as a baseline since it seems to be more robust than
unidirectional search which can take a long time even if the search space in one direction is very small. We see that \algname\ achieves average speedup between
about two and two millions. There is never a slowdown and small single digit
speedups only occur for easy instances where even BFS achieves query times in
the microsecond range.

Each individual heuristics occasionally has speedup below one (i.e., a slight
slowdown). However, this only happens for very easy instances. 
No individual heuristics achieves performance comparable to full \algname\ over all
instances.  For individual
instances, \algname\ is sometimes slightly outperformed by topological levels
for negative queries and by DFS-Trees for positive queries.

RCHs alone are quite good for positive queries but less effective for negative
queries. This makes a combination with pruning heuristics essential to achieve
overall good performance. Since the pruning heuristics work very well for
negative queries, the two approaches complement each other.

Topological levels are good for negative queries but much less so for
positive queries. This is not surprising since for many negative queries they
can disprove reachability from the start by just comparing the levels of $s$ and
$t$. DFS-Trees are very good at reducing the search space size of otherwise difficult
positive queries.

\subsection{Comparison with other Approaches}

\begin{table}
\caption{\label{tab:others}Performance of \algname\ compared to GRAIL \cite{YCZ12}
 with five DFS numberings, TF \cite{cheng2013tf}, and PPL \cite{YAIY13}. The numbers for \algname\ are average execution time in ns for queries an and total construction time in ms. The other numbers are slowdown (or space overhead for columns ``ind'') relative to \algname.} 
\scriptsize  \centering
	{\setlength{\tabcolsep}{0.4em}
	\begin{tabular}{@{}l | r r r | r r r | r r r r | r r r r@{}}
		\multicolumn{1}{c}{} & \multicolumn{3}{ c |}{PReaCH} & \multicolumn{3}{ c |}{GRAIL5} & \multicolumn{4}{ c |}{TF} & \multicolumn{4}{ c }{PPL}\\
\cline{2-15}		\multicolumn{1}{c}{} & $+$ & $-$ & constr & $+$ & $-$ & constr & $+$ & $-$ & constr & ind & $+$ & $-$ & constr & ind\\
[0.2em]\hline		kron12 & 45.88 & \textbf{11.86} & \textbf{7.30} & 54.71 & 7.20 & 4.18 & 13.88 & 20.64 & 4\,152 & 27.02 & \textbf{0.50} & 1.59 & 3.81 & \textbf{0.13}\\
		kron17 & 524 & \textbf{31.24} & \textbf{691} & 48.43 & 10.34 & 3.16 & -- & -- & -- & -- & \textbf{0.17} & 1.82 & 3.04 & \textbf{0.11}\\
		kron22 & 1\,546 & \textbf{86.03} & \textbf{74K} & 459 & 6.66 & 1.99 & -- & -- & -- & -- & \textbf{0.13} & 1.56 & 2.33 & \textbf{0.10}\\
[0.2em]\hline		random100m5x & \textbf{31K} & \textbf{12K} & \textbf{622K} & -- & -- & -- & -- & -- & -- & -- & -- & -- & -- & --\\
                    random100m2x & 2267 & 1422 & \textbf{301K} & 1.15 & 0.50 & 3.24 & \textbf{0.22} & \textbf{0.20} & 3.01 & 0.67 & 0.27 & 0.27 & 5.11 & \textbf{0.33}\\
        random10m10x & \textbf{784K} & \textbf{184K} & \textbf{70K} & 25.54 & 18.49 & 2.04 & -- & -- & -- & -- & -- & -- & -- & --\\
		random10m5x & 13K & 4\,831 & \textbf{41K} & 5.58 & 7.67 & 2.34 & \textbf{0.06} & \textbf{0.12} & 11.78 & 11.81 & 0.12 & 0.14 & 32.16 & 4.71\\
		random10m2x & 719 & 589 & \textbf{21K} & 2.66 & 1.13 & 3.00 & \textbf{0.40} & \textbf{0.30} & 3.38 & 0.67 & 0.48 & 0.35 & 4.11 & \textbf{0.33}\\
		random1m10x & 204K & 41K & \textbf{5\,297} & 11.33 & 8.26 & 1.76 & -- & -- & -- & -- & \textbf{0.04} & \textbf{0.09} & 3\,324 & 81.38\\
		random1m5x & 9\,806 & 3\,158 & \textbf{3\,120} & 5.49 & 5.67 & 2.12 & \textbf{0.06} & 0.14 & 13.53 & 11.90 & 0.11 & \textbf{0.13} & 37.65 & 4.76\\
		random1m2x & 301 & 213 & \textbf{1\,635} & 4.53 & 1.75 & 2.81 & \textbf{0.56} & \textbf{0.52} & 3.44 & 0.67 & 0.78 & 0.65 & 4.28 & \textbf{0.33}\\
[0.2em]\hline		cit-Patents & 14K & 1\,519 & \textbf{9\,464} & 3.88 & 3.71 & 2.31 & 0.06 & 0.35 & 22.94 & 16.70 & \textbf{0.06} & \textbf{0.20} & 26.20 & 2.30\\
		citeseer & \textbf{40.07} & 35.25 & \textbf{305} & 24.32 & 3.11 & 7.28 & 2.71 & \textbf{0.26} & 2.66 & 0.38 & 3.05 & 2.81 & 4.76 & \textbf{0.19}\\
		citeseerx & 488 & \textbf{131} & \textbf{7\,127} & 15.80 & 3.00 & 3.02 & 2.04 & 1.42 & 11.86 & 3.72 & \textbf{0.54} & 1.30 & 5.63 & \textbf{0.28}\\
		go-uniprot & 450 & 54.59 & \textbf{6\,097} & 3.41 & 1.93 & 5.78 & 2.50 & \textbf{0.91} & 10.37 & \textbf{0.38} & \textbf{0.56} & 3.71 & 4.30 & 0.48\\
		uniprotenc-22m & \textbf{36.03} & \textbf{32.55} & \textbf{402} & 22.25 & 2.76 & 13.38 & 1.79 & 1.15 & 5.53 & 0.41 & 3.15 & 3.62 & 6.89 & \textbf{0.18}\\
		uniprotenc-100m & \textbf{53.88} & \textbf{78.47} & \textbf{6\,072} & 21.11 & 2.30 & 12.09 & 2.60 & 1.21 & 6.56 & 0.41 & 3.56 & 2.55 & 4.74 & \textbf{0.18}\\
		uniprotenc-150m & \textbf{60.15} & \textbf{103} & \textbf{10K} & 22.32 & 2.22 & 12.03 & 3.00 & 1.12 & 5.52 & 0.41 & 3.40 & 2.16 & 4.53 & \textbf{0.18}\\
[0.2em]\hline		arxiv & 408 & 181 & \textbf{6.08} & 5.52 & 3.02 & 2.82 & 0.65 & 2.38 & 1\,128 & 23.93 & \textbf{0.14} & \textbf{0.26} & 5.83 & \textbf{0.52}\\
		citeseer-sub & 126 & 60.06 & \textbf{6.83} & 5.15 & 2.04 & 3.23 & \textbf{0.54} & \textbf{0.66} & 15.91 & 1.30 & 0.59 & 0.77 & 5.27 & \textbf{0.35}\\
		go & 49.85 & 30.59 & \textbf{2.55} & 5.50 & 1.90 & 3.81 & \textbf{0.94} & \textbf{0.94} & 15.29 & 0.67 & 1.21 & 1.22 & 5.67 & \textbf{0.44}\\
		pubmed & 201 & 51.24 & \textbf{5.34} & 5.75 & 2.62 & 3.31 & 0.46 & \textbf{0.79} & 52.86 & 1.45 & \textbf{0.38} & 0.88 & 4.76 & \textbf{0.40}\\
		yago & \textbf{44.58} & 19.39 & \textbf{3.45} & 9.76 & 2.66 & 4.06 & 2.10 & \textbf{0.87} & 16.91 & 0.68 & 1.15 & 1.96 & 4.76 & \textbf{0.36}\\
[0.2em]\hline		agrocyc & \textbf{12.69} & \textbf{7.71} & \textbf{2.10} & 18.82 & 4.99 & 7.63 & 17.10 & 2.09 & 15.64 & 0.76 & 2.19 & 3.82 & 8.05 & \textbf{0.24}\\
		amaze & \textbf{13.02} & \textbf{8.41} & \textbf{0.75} & 19.66 & 3.27 & 6.87 & 1.31 & 1.13 & 12.69 & 0.41 & 1.95 & 2.45 & 7.05 & \textbf{0.24}\\
		anthra & \textbf{12.68} & \textbf{7.16} & \textbf{2.05} & 18.30 & 5.23 & 7.70 & 13.14 & 2.18 & 22.89 & 0.71 & 2.16 & 3.95 & 8.24 & \textbf{0.24}\\
		ecoo & \textbf{13.80} & \textbf{7.24} & \textbf{2.07} & 17.53 & 5.44 & 7.96 & 2.73 & 2.26 & 16.84 & 0.76 & 2.00 & 4.00 & 8.11 & \textbf{0.24}\\
		human & \textbf{13.34} & \textbf{7.65} & \textbf{6.82} & 17.90 & 8.47 & 10.41 & 16.81 & 1.93 & 11.47 & 0.53 & 2.18 & 5.54 & 7.81 & \textbf{0.24}\\
		kegg & \textbf{13.17} & \textbf{8.95} & \textbf{0.76} & 20.18 & 3.40 & 6.48 & 1.42 & 1.18 & 13.53 & 0.41 & 2.08 & 2.54 & 14.24 & \textbf{0.24}\\
		mtbrv & \textbf{12.64} & \textbf{7.24} & \textbf{1.64} & 19.11 & 4.92 & 7.27 & 1.37 & 1.90 & 12.77 & 0.47 & 2.14 & 3.74 & 8.01 & \textbf{0.24}\\
		nasa & \textbf{23.01} & \textbf{21.07} & \textbf{1.38} & 13.30 & 2.27 & 4.83 & 1.85 & 1.18 & 18.93 & 0.65 & 2.09 & 1.42 & 7.92 & \textbf{0.41}\\
		vchocyc & \textbf{13.24} & \textbf{7.73} & \textbf{1.57} & 17.78 & 4.60 & 7.50 & 2.56 & 2.04 & 28.45 & 0.82 & 2.02 & 3.55 & 8.18 & \textbf{0.24}\\
		xmark & \textbf{41.37} & \textbf{21.71} & \textbf{1.31} & 5.91 & 2.88 & 5.65 & 1.88 & 1.18 & 20.69 & 0.65 & 1.12 & 1.43 & 10.53 & \textbf{0.41}\\
[0.2em]\hline		email-EuAll & \textbf{31.33} & 28.41 & \textbf{72.56} & 20.89 & 3.17 & 9.01 & 2.95 & \textbf{0.55} & 2.10 & 0.35 & 2.92 & 2.56 & 5.56 & \textbf{0.18}\\
		p2p-Gnutella31 & \textbf{34.42} & \textbf{10.05} & \textbf{14.80} & 10.95 & 6.86 & 7.28 & 1.54 & 1.68 & 3.11 & 0.41 & 1.51 & 4.43 & 4.90 & \textbf{0.24}\\
		soc-LiveJournal1 & \textbf{50.39} & \textbf{30.02} & \textbf{323} & 15.06 & 3.45 & 8.88 & 2.18 & 1.29 & 2.00 & 0.41 & 2.73 & 2.82 & 5.39 & \textbf{0.24}\\
		web-Google & \textbf{51.81} & \textbf{42.06} & \textbf{211} & 14.21 & 3.83 & 5.25 & 1.82 & 1.11 & 2.23 & 0.41 & 2.29 & 1.88 & 4.09 & \textbf{0.24}\\
		wiki-Talk & \textbf{61.53} & \textbf{34.85} & \textbf{925} & 17.13 & 2.66 & 8.91 & 2.56 & 1.18 & 1.79 & 0.41 & 2.44 & 3.49 & 4.36 & \textbf{0.18}\\
	\end{tabular}
}

\end{table}

\begin{figure}[htb]
  \centering
  \includegraphics[width=0.8\textwidth]{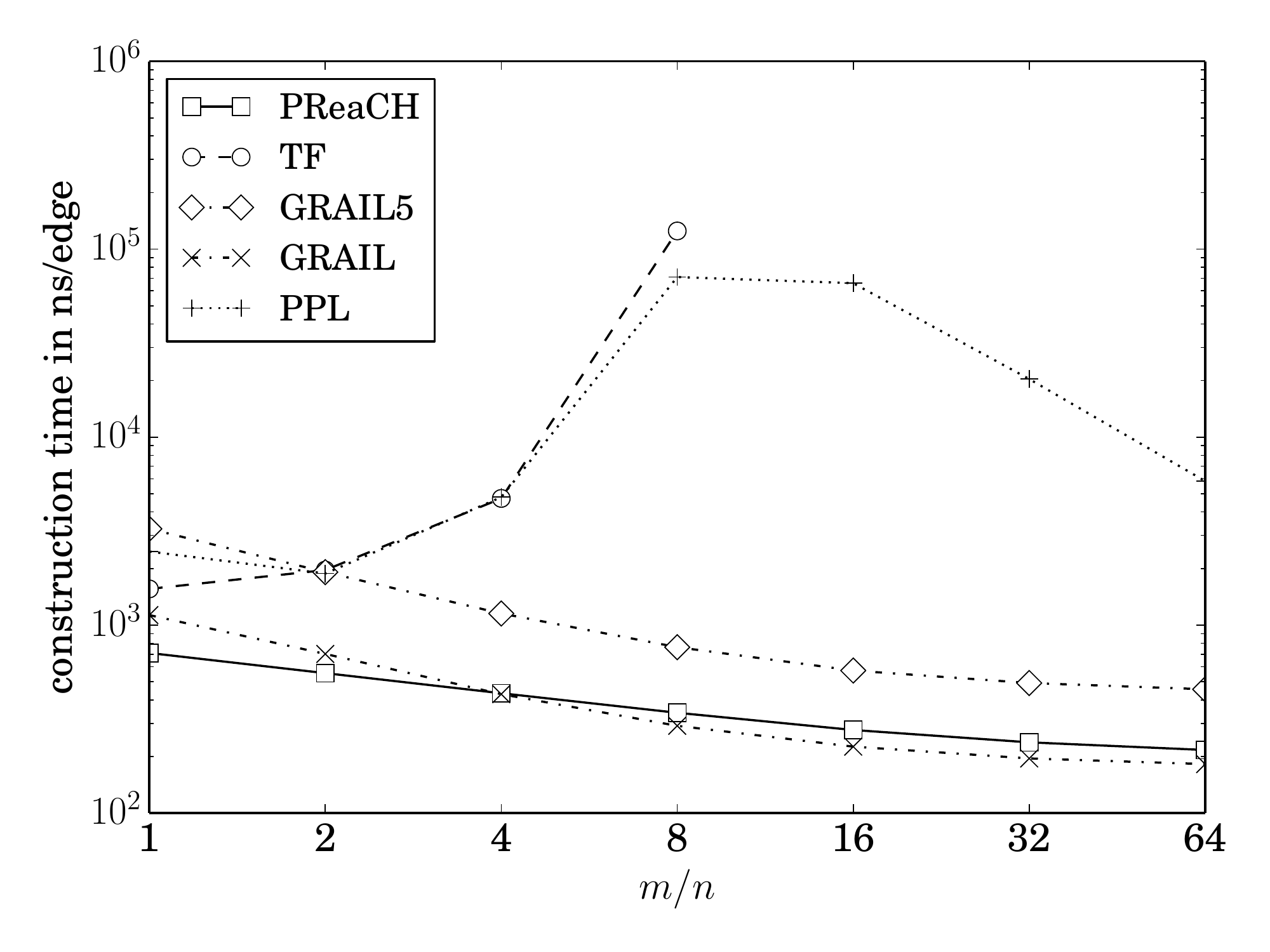}
  \caption{\label{fig:cmn}Construction time per edge for random DAGs with $n=100\,000$, $m/n = 2\dots 64$}
\end{figure}

\begin{figure}[htb]
  \centering
  \includegraphics[width=0.8\textwidth]{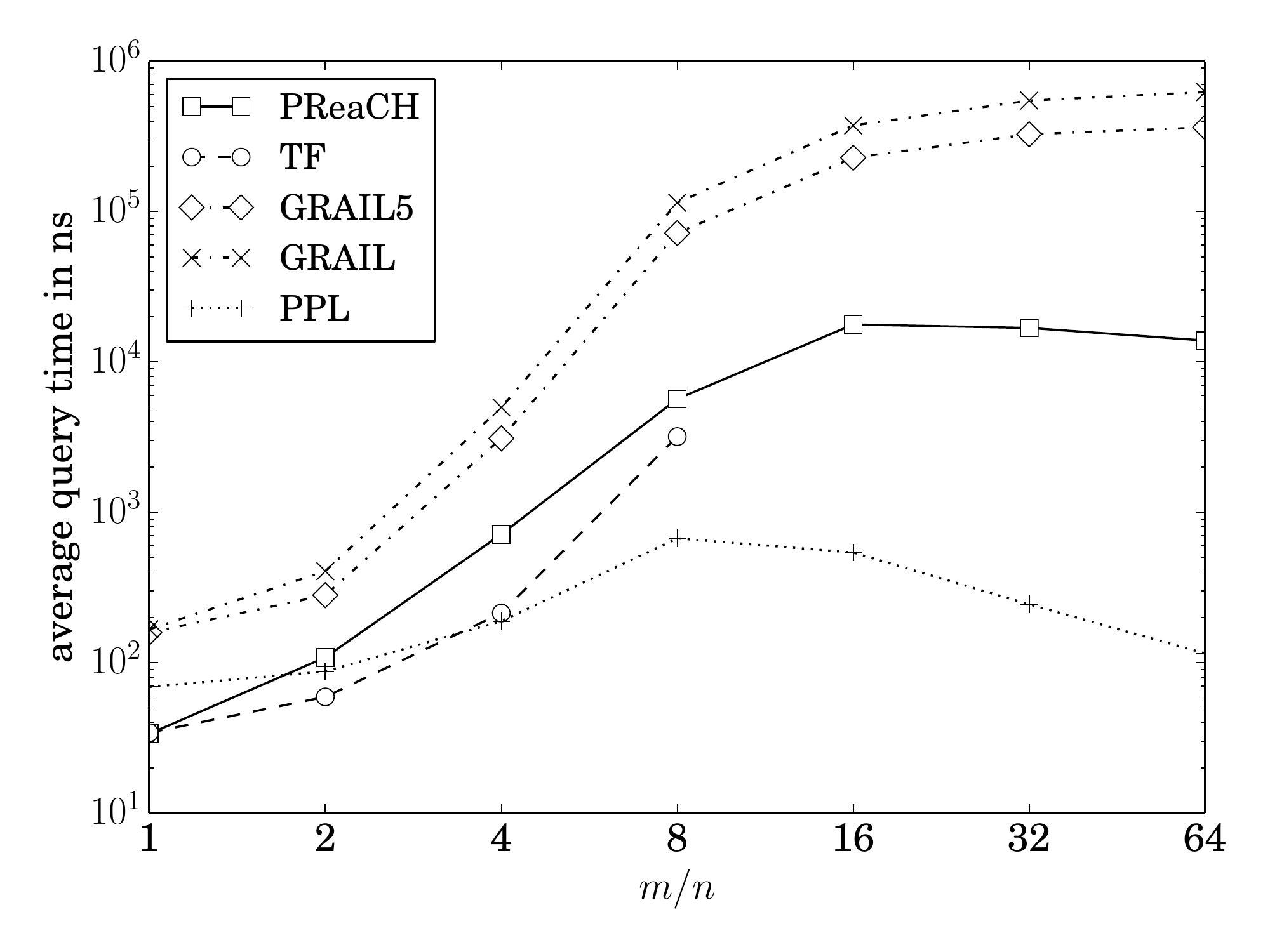}
  \caption{\label{fig:qmn}Average query time for random DAGs with $n=100\,000$, $m/n = 2\dots 64$}
\end{figure}

\begin{figure}[htb]
  \centering
  \includegraphics[width=0.8\textwidth]{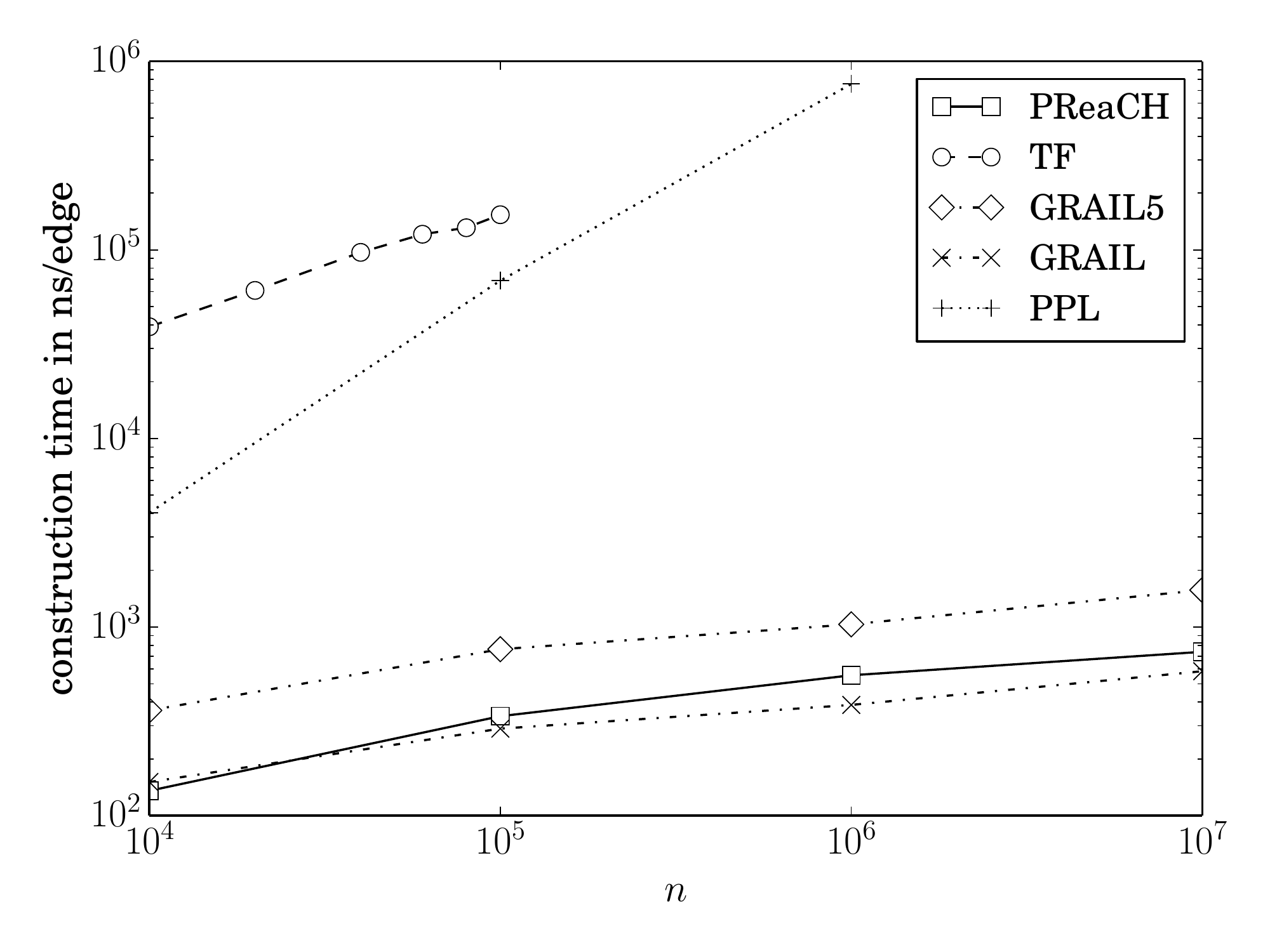}
  \caption{\label{fig:cn}Construction time per edge for random DAGs with $n=10^4\dots 10^7$, $m/n = 8$}
\end{figure}

\begin{figure}[htb]
  \centering
  \includegraphics[width=0.8\textwidth]{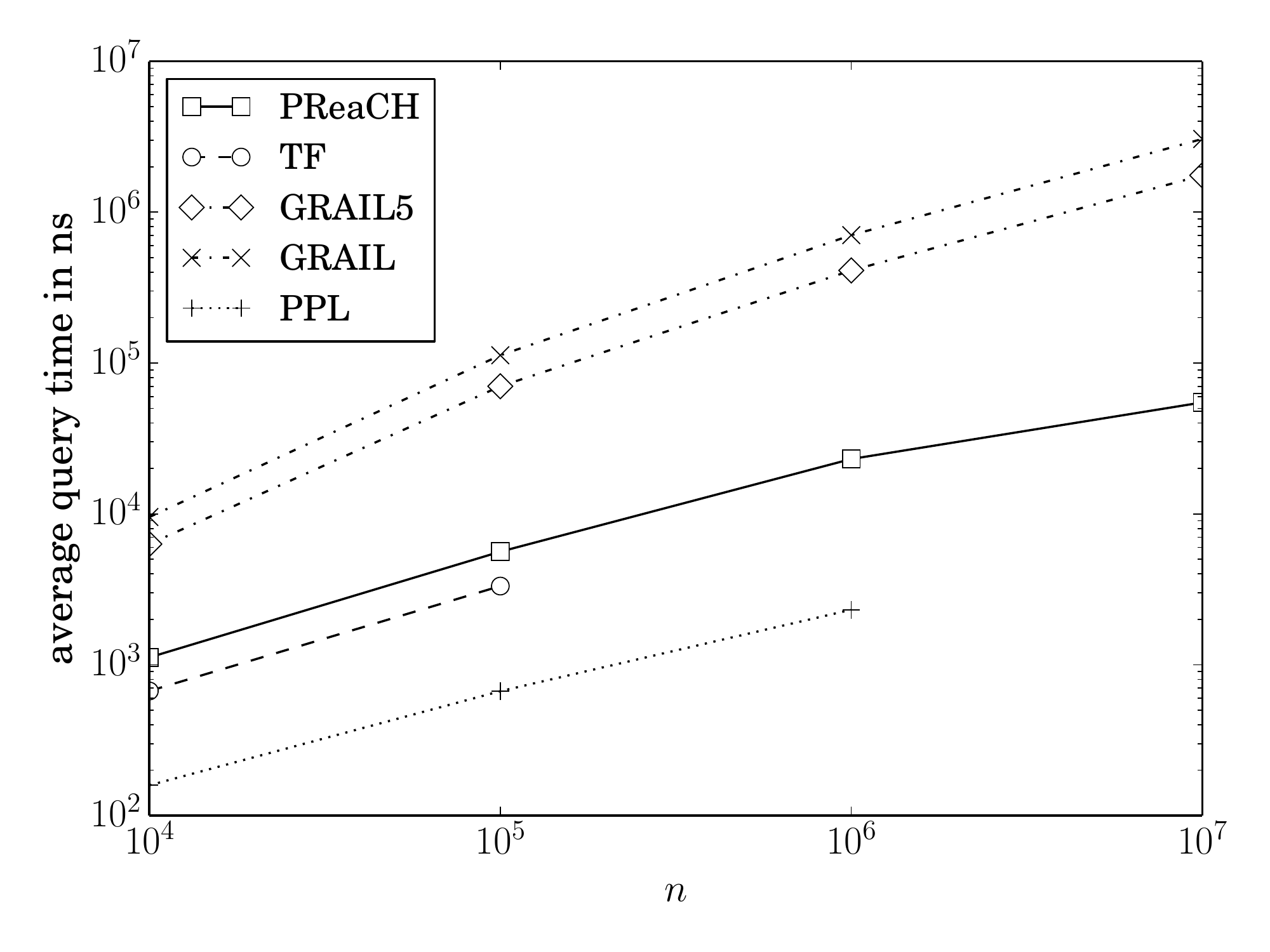}
  \caption{\label{fig:qn}Average query time for random DAGs with $n=10^4\dots 10^7$, $m/n = 8$}
\end{figure}

There are so many reachability indices that it is impossible to compare with all
of them directly. We have therefore focused on three recent techniques that
fare very well in comparison with others and seem to constitute the state of the
art. 

GRAIL \cite{YCZ12} is particularly interesting since, similar
to \algname, it has guaranteed linear preprocessing time and space. The authors
recommend a variant using data from five DFS traversals which we call GRAIL5 and
which we use in most comparisons.%
\footnote{More precisely, we are using the settings {\tt -dim 5 -t \mbox{\tt -2} -ltype 1 } in the code from April 2011 downloaded from \url{http://code.google.com/p/grail/}.}\frage{add that this is the recommended setting for high query efficiency?}
Incidentally, GRAIL5 also uses about the same amount of space
than \algname\ so that this additionally simplifies the analysis.
In some experiments we also
look at the more light weight variant with a single DFS and call it GRAIL.

TF \cite{cheng2013tf} is a more recent labelling technique based on ``folding'' paths.
It is particularly, useful for graphs with small value of $d$ in Table~\ref{tab:Graphs}.
PPL \cite{YAIY13} is the most recent technique which appeared only a few weeks
before the submission of this paper. It is also a labelling technique and very
often achieves quite small labels, excellent query time and good preprocessing
time. In particular, it can profit from long paths in the graph.

Table~\ref{tab:others} summarizes the results giving absolute values for
\algname\ and slowdown factors relative to \algname\ for the other
heuristics. \algname\ dominates GRAIL5 with respect to both query time and
preprocessing time (while using about the same space).  The advantage is
particularly pronounced for positive queries where the improvement is often more
than an order of magnitude. The significant advantage of \algname\ over GRAIL5
with respect to preprocessing is surprising since both technique traverse the
graph five times (for \algname: RCHs, forward/backward topological levels,
forward/backward DFS) and since \algname\ has additional overhead for a priority
queue. The reason may be implementation details or deteriorating cache
efficiency due to the randomization of DFS used in GRAIL. Neither TF nor PPL
dominate GRAIL5 because they often need much higher preprocessing time.

Comparing \algname\ with TF and PPL is more complicated.  With respect to
construction time, \algname\ is always the best algorithm -- sometimes by orders
of magnitude. For the most difficult instances TF ran out of memory. For {\tt
  random10M10x} PPL was stopped after 9h.  With respect to
query time, \algname\ achieves the best values for 43 out of 72 cases while TF
ranks second with 16 best values closely followed by PPL with 13 best values.
Basically, for easy instances, \algname\ slightly outperforms the labeling
techniques. For difficult instances which the labeling techniques can handle at
all, they significantly outperform \algname\ but at the cost of very high
preprocessing time (and space in case of TF).  With respect to space
consumption, PPL is the best in almost all cases. However, for dense random
graphs and for {\tt cit-Patents} much more space is needed than for \algname,
i.e., \algname\ can still score for being more predictable with respect to space
consumption.

In order to get a feeling how the performance of the algorithms varies with the size and
density of the graphs, we performed scaling experiments with random graphs.
Figures~\ref{fig:cmn}-\ref{fig:qn} plot the results.  

With respect to construction time, \algname\ and GRAIL show the expected near
linear behavior, i.e. near constant time per edge. Both curves slightly grow
with input size probably due to cache effects.  Both TF and PPL scale much worse
with respect to graph size or density at least for small $m/n$. Interestingly,
PPL first gets worse as $m/n$ goes up to 8 and then becomes better
again. Probably for sufficiently dense random graphs, one can identify a small
number of paths witnessing most existing connections and the path labelling of
PPL can profit from this.  For fixed $m/n$, PPL's construction time scales very
badly with growing $n$. TF seems to gain ground on PPL with growing $n$ but it
cannot cash in on this perspective since it runs out of memory for large $n$.

The query times of GRAIL and \algname\ first grow with growing density and then
flatten. In particular, \algname\ profits once the edge density exceeds 16.  For
PPL, we see the same rise and fall as for the construction time when scaling
$m/n$ with a maximum query time at $m/n=8$. Even at $m/n=8$, PPL has the best
query times for random graphs outclassing all other algorithms for dense graphs.

\begin{figure}[htb]
  \centering
  \includegraphics[width=0.8\textwidth]{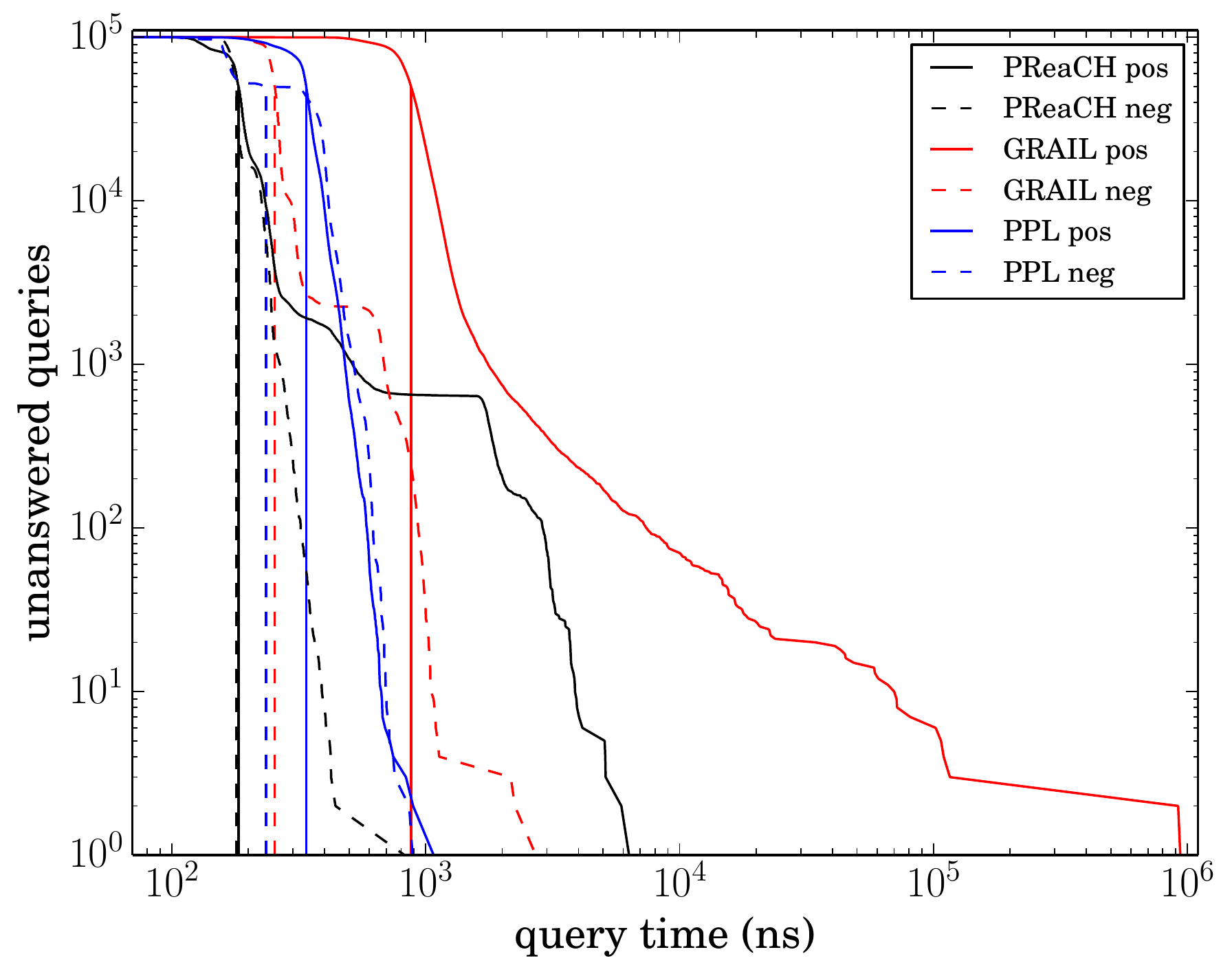}
  \caption{\label{fig:qDisWiki}Query time distribution of $100\,000$ queries on \texttt{wiki-Talk}.}
\end{figure}
\begin{figure}[htb]
  \centering
  \includegraphics[width=0.8\textwidth]{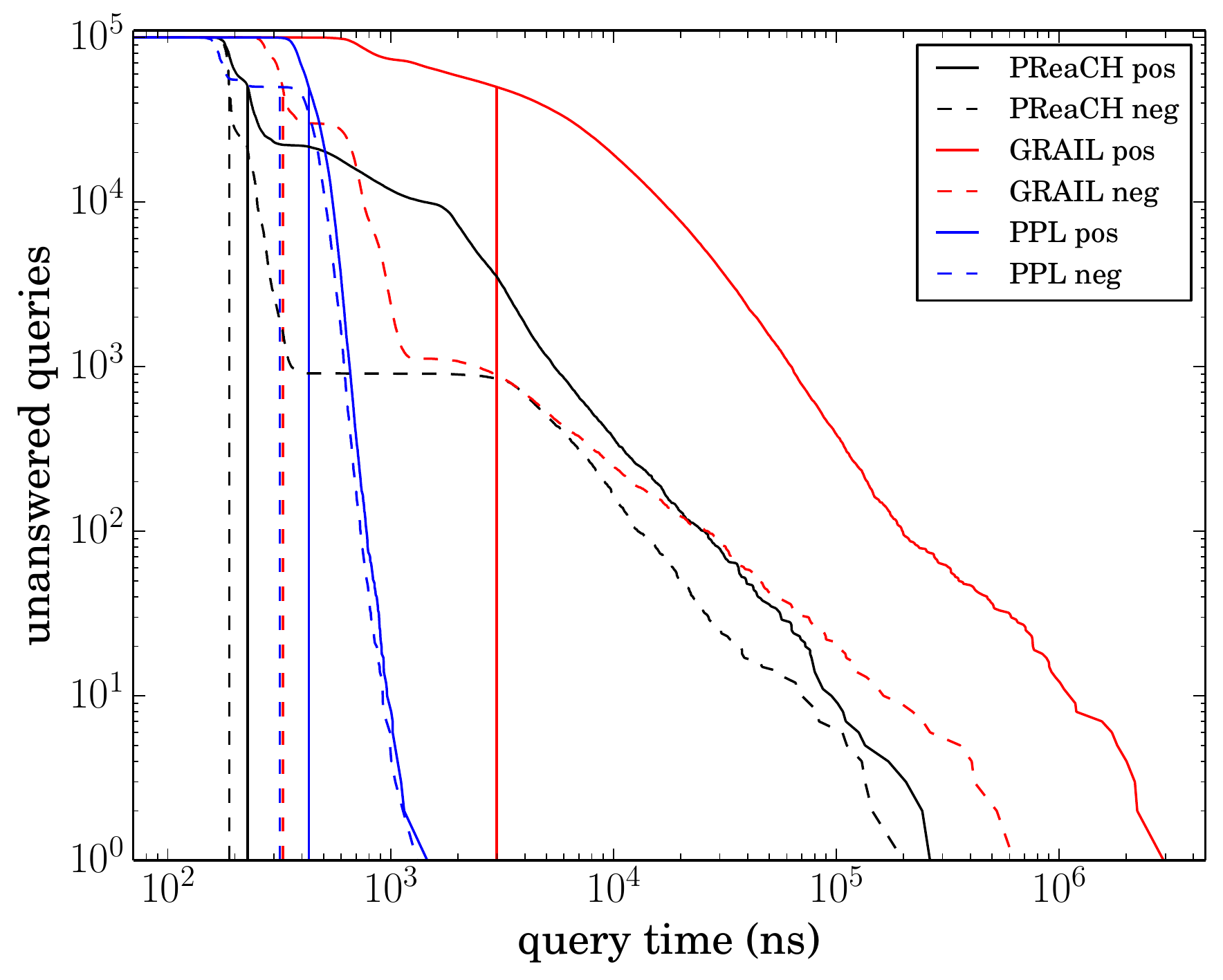}
  \caption{\label{fig:qDisCite}Query time distribution of $100\,000$ queries on \texttt{citeseerx}.}
\end{figure}
\begin{figure}[htb]
  \centering
  \includegraphics[width=0.8\textwidth]{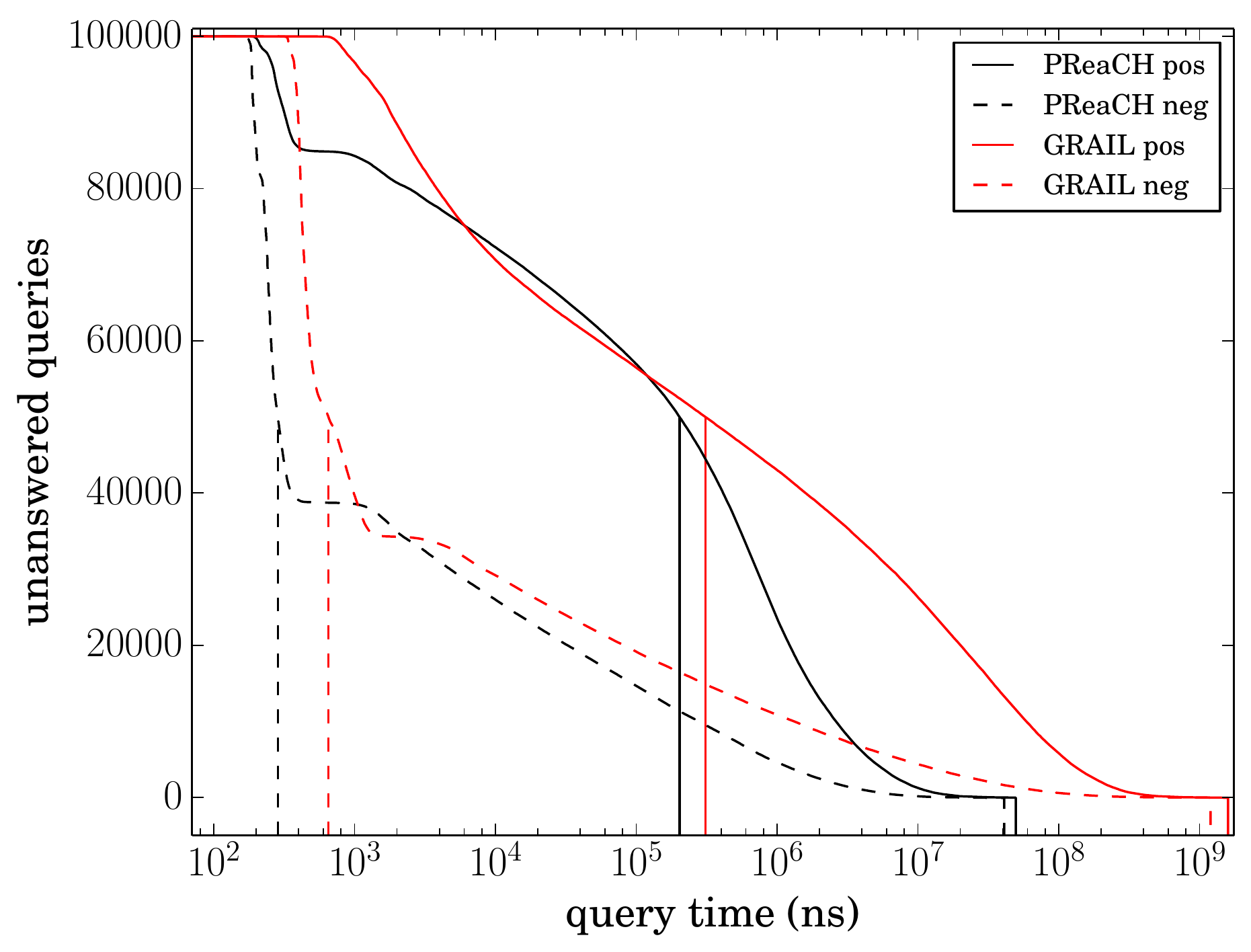}
  \caption{\label{fig:qDisRand}Query time distribution of $100\,000$ queries on \texttt{random10m10x}.}
\end{figure}

Most previous studies concentrate on average query times. We believe that this
can be misleading and thus also look at query time distributions now.
Figures~\ref{fig:qDisWiki}-\ref{fig:qDisRand} show query
time distributions for three large instances from three different categories.
Data for more instances will be published with the technical report. The pruning
based algorithms GRAIL and \algname\ show fluctuations of query time varying
over several orders of magnitude. For example, for positive queries, instance
{\tt wiki-Talk} and GRAIL5 there are almost three orders of magnitude between
the maximal query time and the median query time. \algname\ is somewhat more
stable. In particular, its maximal query time for (among the 100\,000 positive
and 100\,000 negative queries tried) is two orders of magnitude smaller than
GRAIL5.  The differences are less pronounced in the other instances, but
\algname\ still clearly outperforms GRAIL5 also with respect to the distribution
of query times. The difference is particularly big for positive queries.  For
instances {\tt citeseerx} and {\tt wiki-Talk} PPL is not only better on average
but also shows much less fluctuations in query time.  Recall that for instance
{\tt random10m10x} PPL did not finish preprocessing.

\section{Conclusion}
\label{s:conclusion}

The question about \emph{the best} reachability index is very complex since it
depends (at least) on the criteria query time, construction time, space, and the
characteristics of the used instances. However, \algname\ is a serious candidate
in many situations. It has near linear preprocessing time with small constant
factors and needs linear space. Among the methods with these ``light-weight''
properties (like GRAIL \cite{YCZ12}) \algname\ is clearly the
fastest with respect to query time.  Other methods like TF \cite{cheng2013tf}
and PPL \cite{YAIY13} are considerably faster for some instances but pay with
larger preprocessing time and higher or less predictable space consumption.

Another advantage of \algname\ is that it seems to be more easy to adapt this
approach to actually compute paths for positive queries. RCHs explicitly
generate a path that, due to the absence of shortcuts in our implementation,
need not even be unpacked. Pruning rules involving empty intervals and
topological levels are no problem since they are not applied on the path. The
only problem are pruning rules involving full intervals. The full interval
$\dfsNum(v)..\dfsNumMax(v)$ can be handled by explicitly storing parent pointers
for the DFS tree: if we stop the search because $t\in \dfsNum(v)..\dfsNumMax(v)$
we can reconstruct the path from $v$ to $t$ by following parent pointers from
$t$. The full interval $\dfsNum(\pTree(v))..\dfsNumMax(\pTree(v))$ is more
complicated. However, by storing $\pTree(v)$ explicitly we can search for the
edge $(v,w)$ defining $\pTree(v)$ among the edges leaving $v$. We follow these
edges until we reach $\pTree(v)$. From there we can backtrace parent pointers
from $t$. Analogous strategies work for the backward part of the search.

The constant factor in the space consumption of \algname\ might be too expensive
in some applications. However, we can derive very space efficient reachability
indices from \algname\ also. For example using a variant of RCHs, we only
need to store the graph itself plus a few bits telling where to split the edges
stored with a node between forward and backward search space. Except for the
Kronecker graphs, all the instances given in Table~\ref{tab:Graphs} need at most
eight bits for representing a topological level. Hence, with two bytes per node
one can support pruning with topological levels additionally.

From a conceptual point of view it is interesting that the usefulness of
RCHs underlines that analogies between speedup techniques for route planning and
reachability indices deserve further attention.

\subsection*{Future Work}

Our result suggest several promising avenues for further research.  Staying
close to \algname, we can try to trade time for space by using several DFS
orderings simultaneously for pruning the search. We can trade query time for
preprocessing time by performing several DFS searches (with random tie breaking)
and only use the best one for actually storing the index.  Judging what ``the
best'' is could be based on performing queries for a random sample.  The same
idea can be applied to the contraction hierarchy. More interesting would be more
clever heuristics to find good DFS-orderings and RCH-orderings. For example, for
DFS we could better approximate the tree sizes by actually performing complete
DFS explorations from all sources before deciding what the first tree is going
to be.  For RCH-ordering, the simple, static priority function based on degree
seems only like a very first attempt.%
\footnote{We did try ordering nodes by their degree in the remaining graph
  rather than by their degree in the input graph. However, this did not yield
  improvement in query time justifying the overhead during construction.} For
example, CHs for route planning \cite{GSSV12} use an estimate of the (unpruned)
search space size as an important term in the priority function.

Besides compressing topological levels as mentioned above, we can 
also compress the data derived from DFS traversal.
It suffices to store $\dfsNumMax(v)-\dfsNum(v)$ rather than
$\dfsNumMax(v)$ which will be small for most nodes. We could for example
represent only values between 0 and 254 directly using 255 as an escape value
indicating that the true value can be found in a small hash table. We have
already mentioned that at the cost of an additional indirection, the values
$\dfsNum(\pTree(v))$ and $\dfsNumMax(\pTree(v))$ do not need to be stored when
we store $\pTree(v)$.  All these measures combined would reduce the space
requirement by about one third.

When scaling to even larger graphs, we would like to parallelize preprocessing.
A certain degree of `easy' parallelization is available by computing the RCH,
topological levels and DFS based information independently. Moreover we can
perform multiple DFS -- using all its information or only the one that works
best.  For graphs with not too many topological levels, finding these levels can
be done by peeling them off in parallel. A similar strategy works for CHs (see
also \cite{GSSV12}) for route planning. There is also intensive work on finding
SCCs in parallel (e.g., \cite{HRO13}). For finding full intervals as in
Section~\ref{ss:dfs} we could also use any preorder numbering of any spanning
forest of the graph, e.g., based on BFS. Only for the empty intervals we would
need a replacement for DFS with similarly useful properties.

Exploiting another analogy to route planning \cite{AbrahamDGW11}, we could use
RCHs to derive a fast 2-hop-labelling scheme for reachability -- simply use the
CH-search spaces from each node as its labels. Note that this could even be done
in a lazy fashion -- start with RCHs (or full \algname) but whenever a node is
queried, cache the discovered search space. Combining this with the ideas in PPL
\cite{YAIY13} like pruning search spaces or using paths rather than nodes in
labels could give additional improvements.

\bibliographystyle{abbrv}
\bibliography{diss,DA/thesis}
\end{document}